\documentclass[conference, a4paper]{IEEEtran}
\usepackage[utf8]{inputenc}
\usepackage{amsmath}
\usepackage{amsthm}
\usepackage{mathtools}
\usepackage{amsfonts}
\usepackage{hyperref}
\usepackage{url}
\usepackage{multiaudience}
  \SetNewAudience{conference}
  \SetNewAudience{arxiv}

\newcommand\blankfootnote[1]{%
  \let\svthefootnote\thefootnote%
  \let\thefootnote\relax\footnotetext{#1}%
  \let\thefootnote\svthefootnote%
}

\newcommand{\channelPMeasure}{Q}
\newcommand{\codebookPMeasure}{P}
\newcommand{\codebookRate}{R}
\newcommand{\channelIn}{X}
\newcommand{\channelInAlph}{\mathcal{X}}
\newcommand{\channelInAlphElement}{x}
\newcommand{\channelOut}{Y}
\newcommand{\channelOutAlph}{\mathcal{Y}}
\newcommand{\channelOutAlphElement}{y}
\newcommand{\channel}{\mathcal{W}}

\newcommand{\alphSubset}{A}
\newcommand{\codebook}{\mathcal{C}}
\newcommand{\codeword}{C}
\newcommand{\codewordIndex}{m}
\newcommand{\codebookBlocklength}{n}
\newcommand{\mutualInformation}[2]{I(#1;#2)}
\newcommand{\finalconstOne}{\gamma_1}
\newcommand{\finalconstTwo}{\gamma_2}
\newcommand{\totalvariation}[1]{\left\lVert #1 \right\rVert_\text{TV}}
\newcommand{\absolute}[1]{\left\lvert #1 \right\rvert}
\newcommand{\positive}[1]{\left[ #1 \right]^+}
\newcommand{\renyiParam}{\alpha}
\newcommand{\proofconstantOne}{{\beta_1}}
\newcommand{\proofconstantTwo}{{\beta_2}}

\newcommand{\generalAlphabetOne}{\mathcal{X}}
\newcommand{\generalAlphabetOneElement}{x}
\newcommand{\generalAlphabetTwo}{\mathcal{Y}}
\newcommand{\generalAlphabetTwoElement}{y}
\newcommand{\generalPMeasure}{\mu}
\newcommand{\generalPMeasureTwo}{\nu}
\newcommand{\generalRVOne}{X}
\newcommand{\generalRVTwo}{Y}
\newcommand{\RNDerivative}[2]{\frac{d{#1}}{d{#2}}}
\newcommand{\informationDensity}[2]{i({#1},{#2})}
\newcommand{\renyidiv}[3]{D_{#1}\left({#2} || {#3}\right)}
\newcommand{\Expectation}{\mathbb{E}}
\newcommand{\indicator}[1]{1_{#1}}
\newcommand{\lemmaconst}{\delta}
\newcommand{\lemmaexpectation}{\mu}
\newcommand{\chernoffParam}{\lambda}
\newcommand{\rateInfSubst}{r}
\newcommand{\distSubst}{a}
\newcommand{\intSubst}{b}
\newcommand{\derivwrt}[2]{\frac{d}{d {#2}} {#1}}
\newcommand{\generalReal}{t}
\newcommand{\typicalityParam}{\varepsilon}
\newcommand{\typicalSet}[1]{\mathcal{T}_{#1}}
\newcommand{\channelDispersion}{V}
\newcommand{\channelThirdMoment}{\rho}
\newcommand{\normalcdfComplement}{\mathcal{Q}}
\newcommand{\normalcdf}{\Phi}
\newcommand{\normalcdfComplementInverse}{\mathcal{Q}^{-1}}
\newcommand{\secondOrderParamC}{c}
\newcommand{\secondOrderParamD}{d}

\newcommand{\secondOrderTVdiff}{\xi}
\newcommand{\proofconstantlemma}{\xi}
\newcommand{\generalSummationIndex}{k}
\newcommand{\sigmaAlgebraIn}{{\mathcal{F}}}
\newcommand{\sigmaAlgebraOut}{{\mathcal{G}}}
\newcommand{\channelKernel}{K}
\newcommand{\channelDiscretizationIndex}{k}
\newcommand{\blocklengthIndex}{\ell}
\newcommand{\blockIndex}{j}
\newcommand{\blocklengthSubsequenceIndex}{i}
\newcommand{\powerset}{\mathcal{P}}

\newcommand{\sigmaAlgebraProduct}{\otimes}

\newcommand{\mutualInformationWrt}[3]{I_{#1}(#2;#3)}
\newcommand{\entropyWrt}[2]{H_{#1}(#2)}
\newcommand{\cardinality}[1]{\lvert #1 \rvert}

\newcommand{\reals}{\mathbb{R}}
\newcommand{\absolutelyContinuous}{\ll}
\newcommand{\generalSigmaAlgebra}{\mathcal{F}}
\newcommand{\generalSigmaAlgebraTwo}{\mathcal{G}}
\newcommand{\productSigmaAlgebra}{\otimes}
\newcommand{\kernelPower}[2]{{#1}^{\otimes {#2}}}
\newcommand{\sigmaAlgebraPower}[2]{{#1}^{\otimes {#2}}}
\newcommand{\generalPower}{n}

\newcommand{\capacityRegion}[2]{\mathcal{S}_{#1, #2}}
\newcommand{\inducingInputDistributions}[1]{G({#1})}
\newcommand{\generalFunction}{f}

\newtheorem{theorem}{Theorem}

\newtheorem{lemma}{Lemma}

\newtheorem{remark}{Remark}

\DefCurrentAudience{arxiv}
\showto{conference}{\allowdisplaybreaks}

\title{Resolvability on Continuous Alphabets}
\author{
Matthias Frey, Igor Bjelaković and Sławomir Stańczak
\\
Technische Universität Berlin
}

\begin{document}

\maketitle
\blankfootnote{
The work was supported by the German Research Foundation (DFG) under grant 
STA864/7-1 and by the German Federal Ministry of Education and Research under 
grant 16KIS0605.
}

\begin{abstract}
  We characterize the resolvability region for a large class of
  point-to-point channels with continuous alphabets. In our direct
  result, we prove not only the existence of good resolvability
  codebooks, but adapt an approach based on the Chernoff-Hoeffding
  bound to the continuous case showing that the probability of drawing
  an unsuitable codebook is doubly exponentially small. For the
  converse part, we show that our previous elementary result carries
  over to the continuous case easily under some mild continuity
  assumption.
\end{abstract}

\section{Introduction}

Channel resolvability has been established as an important tool in
information-theoretic security~\cite{CsiszarSecrecy,BlochStrongSecrecy,DevetakPrivateCapacity}. In particular, strong secrecy can be
derived directly from channel resolvability. The latter, roughly speaking,
is defined as the asymptotically smallest rate of a uniform random
seed that is needed to generate a channel input for which the channel
output well approximates a given target output under some suitable
approximation measure. Potential measures that are commonly used in
the literature are the Kullback-Leibler divergence and the variational
distance. In this paper, we focus on the latter one which is strong
enough to be related to security concepts from cryptography
\cite{FreyMACResolvabilityRegion}.

To the best of our knowledge, Wyner~\cite{WynerCommonInformation} was
the first to propose the problem of approximating a given output
distribution over a communication channel with as little randomness as
possible on the transmitter side. In~\cite{WynerCommonInformation}, he
used a normalized Kullback-Leibler divergence to measure the deviation
between the actual and the target output distribution. Han and
Verdú~\cite{HanApproximation} introduced the notion of channel
resolvability and formulated a similar problem except that they
assumed the variational distance as a metric. Unnormalized Kullback-Leibler divergence
was considered as a measure of similarity in~\cite{HayashiResolvability} and later in~\cite{HouInformational};
Rényi divergence was considered in~\cite{YuRenyi}. The results
of~\cite{CsiszarSecrecy,DevetakPrivateCapacity,CuffSoftCovering}
show that not only good resolvability codebooks exist but also that
the probability of drawing an unsuitable random codebook is doubly
exponentially small. Second order results for resolvability rate are
presented in~\cite{WatanabeSecondOrder,CuffSoftCovering} and
for MAC
in~\cite{FreyMACResolvability,FreyMACResolvabilityRegion}. Nonasymptotic
results are obtained in~\cite{HayashiResolvability}.

Converse theorems for arbitrary input and output distribution (without i.i.d.
assumption across channel uses) are contained in~\cite{HanApproximation,YagiResolvability}
and~\cite{SteinbergResolvability} for MAC. A converse resolvability result
based on Kullback-Leibler divergence is shown in~\cite{WynerCommonInformation}
and a simpler argument is given in~\cite{HouPhDThesis}. As we focus on variational
distance, these results do not carry over to our case. 

For MAC with finite alphabets and i.i.d. inputs, we have established
direct and converse results in~\cite{FreyMACResolvabilityRegion}. \emph{In
this work, we extend those results to continuous alphabets in the
point-to-point setting.} This extension to nondiscrete alphabets is in particular a
step towards dropping the assumption common in secrecy results that the wiretapper's alphabet
is discrete. Thus, these results may be extended to many channels particularly practically
relevant in wireless communications such as the AWGN or the Rayleigh fading channel.
We also remark that although we use the same technique as~\cite{CuffSoftCovering,FreyMACResolvability},
the extension to nondiscrete alphabets is not entirely straightforward: Dropping the assumption
that the alphabets are finite means that bounding the typical part of the variational distance with
applications of the Chernoff-Hoeffding and union bound is not possible. Instead, we apply the
Chernoff-Hoeffding bound in the usual way and infer the bound on the variation distance more directly
as layed out in Lemmas~\ref{lemma:typical} and \ref{lemma:typical-helper}.

\section{Preliminaries}
A \emph{channel} $\channel = ((\channelInAlph, \sigmaAlgebraIn), (\channelOutAlph, \sigmaAlgebraOut), \channelKernel)$ is given by an \emph{input alphabet} $\channelInAlph$ with $\sigma$-algebra $\sigmaAlgebraIn$, an \emph{output alphabet} $\channelOutAlph$ with $\sigma$-algebra $\sigmaAlgebraOut$ and a \emph{stochastic kernel} $\channelKernel$ which defines a stochastic transition between the input and output alphabets. I.e., $\channelKernel$ is a mapping from $\channelInAlph \times \sigmaAlgebraOut$ to $[0,1]$ such that $\channelKernel(\cdot, \alphSubset)$ is measurable for each $\alphSubset \in \sigmaAlgebraOut$ and $\channelKernel(\channelInAlphElement, \cdot)$ is a probability measure on $(\channelOutAlph, \sigmaAlgebraOut)$ for each $\channelInAlphElement \in \channelInAlph$. We assume throughout this paper that the input and output alphabets are Polish with Borel $\sigma$-algebra. $\channelIn$ and $\channelOut$ are random variables denoting the channel input and output respectively. Given $\sigma$-algebras $\generalSigmaAlgebra$ and $\generalSigmaAlgebraTwo$, we denote their product $\sigma$-algebra by $\generalSigmaAlgebra \productSigmaAlgebra \generalSigmaAlgebraTwo$. Likewise, the product of $\generalPower$ copies of $\generalSigmaAlgebra$ is denoted by $\sigmaAlgebraPower{\generalSigmaAlgebra}{\generalPower}$. The $\generalPower$th extension of $\channelKernel$ is given by
$
\kernelPower{\channelKernel}{\generalPower}(\channelInAlphElement^\generalPower, \bigtimes_{\blockIndex=1}^\generalPower \alphSubset_\blockIndex)
:=
\prod_{\blockIndex=1}^\generalPower \channelKernel(\channelInAlphElement_\blockIndex, \alphSubset_\blockIndex)
$.
An \emph{input distribution} $\channelPMeasure_\channelIn$ on $(\channelInAlph, \sigmaAlgebraIn)$ induces a joint distribution $\channelPMeasure_{\channelIn,\channelOut}$ on $(\channelInAlph \times \channelOutAlph, \sigmaAlgebraIn \productSigmaAlgebra \sigmaAlgebraOut)$ via
$
\channelPMeasure_{\channelIn,\channelOut}(\alphSubset_1 \times \alphSubset_2) := \int_{\alphSubset_1} \channelKernel(\channelInAlphElement, \alphSubset_2) \channelPMeasure_\channelIn(d\channelInAlphElement)
$
for $\alphSubset_1 \in \sigmaAlgebraIn$, $\alphSubset_2 \in \sigmaAlgebraOut$. The induced \emph{output distribution} is denoted $\channelPMeasure_\channelOut$. The $\codebookBlocklength$-fold products of the input and output distributions are denoted $\channelPMeasure_{\channelIn^\codebookBlocklength}$ and $\channelPMeasure_{\channelOut^\codebookBlocklength}$, respectively. A \emph{codebook} for the input alphabet $\channelInAlph$ with block length $\codebookBlocklength$ and rate $\codebookRate$ is a tuple $\codebook = (\codeword(\codewordIndex))_{\codewordIndex = 1}^{\exp(\codebookBlocklength\codebookRate)}$, where the \emph{codewords} are $\codeword(\codewordIndex) \in \channelInAlph^\codebookBlocklength$. We define the \emph{input distribution induced by $\codebook$} as
$
\codebookPMeasure_{\channelIn^\codebookBlocklength | \codebook}(\alphSubset) := \exp(-\codebookBlocklength\codebookRate) \sum_{\codewordIndex=1}^{\exp(\codebookBlocklength\codebookRate)} \indicator{\codeword(\codewordIndex) \in \alphSubset}.
$
$\codebookPMeasure_{\channelIn^\codebookBlocklength | \codebook}$ and $\kernelPower{\channelKernel}{\codebookBlocklength}$ induce an input-output distribution $\codebookPMeasure_{\channelIn^\codebookBlocklength, \channelOut^\codebookBlocklength | \codebook}$ and output distribution $\codebookPMeasure_{\channelOut^\codebookBlocklength | \codebook}$. Any $\channelPMeasure_\channelIn$ on $(\channelInAlph, \sigmaAlgebraIn)$ induces a distribution $\codebookPMeasure_\codebook$ on the set of possible codebooks by drawing all the components of all the codewords i.i.d. from $\channelInAlph$ according to $\channelPMeasure_\channelIn$. Given probability measures $\generalPMeasure$ and $\generalPMeasureTwo$ on $(\generalAlphabetOne, \generalSigmaAlgebra)$, we define the \emph{variational distance} as
$
\totalvariation{\generalPMeasure - \generalPMeasureTwo}
:=
\sup_{\alphSubset \in \generalSigmaAlgebra} (\generalPMeasure(\alphSubset) - \generalPMeasureTwo(\alphSubset))
$.
We say that $\generalPMeasure$ is absolutely continuous with respect to $\generalPMeasureTwo$, in symbols $\generalPMeasure \absolutelyContinuous \generalPMeasureTwo$, if all $\generalPMeasureTwo$-null sets are $\generalPMeasure$-null sets. If $\generalPMeasure \absolutelyContinuous \generalPMeasureTwo$, the Radon-Nikodym theorem states that there exists a measurable function $\RNDerivative{\generalPMeasure}{\generalPMeasureTwo}: \generalAlphabetOne \rightarrow [0,\infty)$, called the \emph{Radon-Nikodym derivative}, such that for every $\alphSubset \in \generalSigmaAlgebra$,
$
\generalPMeasure(\alphSubset) = \int\nolimits_{\alphSubset} \RNDerivative{\generalPMeasure}{\generalPMeasureTwo}(\generalAlphabetOneElement) \generalPMeasureTwo(d\generalAlphabetOneElement).
$
Given a channel and an input distribution, we define for any $\generalAlphabetOneElement \in \generalAlphabetOne$ and $\generalAlphabetTwoElement \in \generalAlphabetTwo$ the \emph{information density} of $(\generalAlphabetOneElement,\generalAlphabetTwoElement)$ as
$
\informationDensity{\generalAlphabetOneElement}{\generalAlphabetTwoElement}
:=
\log
  \RNDerivative{\channelKernel(\generalAlphabetOneElement,\cdot)}
               {\channelPMeasure_\channelOut}
    (\generalAlphabetTwoElement).
$
By convention, we say that the information density is $\infty$ on the singular set where $\channelKernel(\generalAlphabetOneElement,\cdot)$ is not absolutely continuous with respect to $\channelPMeasure_\channelOut$ and $-\infty$ where the relative density is $0$. Note that if the information density is finite almost everywhere, we can pick versions of the Radon-Nikodym derivatives such that $\informationDensity{\generalAlphabetOneElement}{\generalAlphabetTwoElement}$ is measurable with respect to $\sigmaAlgebraIn \sigmaAlgebraProduct \sigmaAlgebraOut$ \cite[Chap. 5, Theorem 4.44]{CinlarProbability}. In this case, we can define the \emph{mutual information} of $\generalRVOne$ and $\generalRVTwo$ as $\mutualInformation{\generalRVOne}{\generalRVTwo}:=\Expectation_{\channelPMeasure_{\channelIn, \channelOut}} \informationDensity{\generalAlphabetOneElement}{\generalAlphabetTwoElement}.$

\begin{shownto}{conference}
Due to lack of space, we omit some proofs and we also skip some details. For complete and more detailed proofs, we refer the reader to the extended version of this paper~\cite{arxivVersion}.
\end{shownto}

\section{Resolvability Region}
Given a channel $\channel = ((\channelInAlph, \sigmaAlgebraIn), (\channelOutAlph, \sigmaAlgebraOut), \channelKernel)$ and an output distribution $\channelPMeasure_\channelOut$, a rate $\codebookRate \in [0,\infty)$ is called \emph{achievable} if there is a sequence $(\codebook_\blocklengthIndex)_{\blocklengthIndex \geq 1}$ of codebooks with strictly increasing block lengths $\codebookBlocklength_\blocklengthIndex$ and rate $\codebookRate$ such that
$
\lim_{\blocklengthIndex \rightarrow \infty}
\totalvariation{
  \codebookPMeasure_{\channelOut^{\codebookBlocklength_\blocklengthIndex} | \codebook_{\blocklengthIndex}}
  -
  \channelPMeasure_{\channelOut^{\codebookBlocklength_\blocklengthIndex}}
}
=
0
$.
The \emph{resolvability region} $\capacityRegion{\channel}{\channelPMeasure_\channelOut}$ is the closure of the set of all achievable rates.
\begin{theorem}
Let $\channel = ((\channelInAlph, \sigmaAlgebraIn), (\channelOutAlph, \sigmaAlgebraOut), \channelKernel)$ be a channel such that $\channelInAlph$ is compact and for each $\alphSubset \subseteq \channelOutAlph$, $\channelInAlphElement \mapsto \channelKernel(\channelInAlphElement, \alphSubset)$ is a continuous mapping. Let $\channelPMeasure_\channelOut$ be an output distribution. Define
\begin{align*}
\inducingInputDistributions{\channelPMeasure_\channelOut}
:=
\{
  \channelPMeasure_\channelIn:
  &\channelPMeasure_\channelIn \text{ induces } \channelPMeasure_\channelOut \text { through } \channel,~
\\
  &\mutualInformationWrt{\channelPMeasure_{\channelIn, \channelOut}}{\channelIn}{\channelOut} < \infty
\}.
\end{align*}
Then
\begin{align*}
\capacityRegion{\channel}{\channelPMeasure_\channelOut}
=
\left\{
  \codebookRate \in \reals:
  \codebookRate \geq \inf_{\channelPMeasure_\channelIn \in \inducingInputDistributions{\channelPMeasure_\channelOut}} \mutualInformationWrt{\channelPMeasure_{\channelIn, \channelOut}}{\channelIn}{\channelOut}
\right\}.
\end{align*}
\end{theorem}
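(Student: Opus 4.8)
\emph{Structure of the argument.} Abbreviate $\codebookRate^\star := \inf_{\channelPMeasure_\channelIn \in \inducingInputDistributions{\channelPMeasure_\channelOut}} \mutualInformationWrt{\channelPMeasure_{\channelIn, \channelOut}}{\channelIn}{\channelOut}$, with $\codebookRate^\star = +\infty$ when $\inducingInputDistributions{\channelPMeasure_\channelOut} = \emptyset$, in which case the asserted region is empty and there is nothing to prove. I would establish the two inclusions separately: a \emph{direct part}, that every $\codebookRate > \codebookRate^\star$ is achievable, which by definition of $\capacityRegion{\channel}{\channelPMeasure_\channelOut}$ as a closure gives $[\codebookRate^\star,\infty) \cap \reals \subseteq \capacityRegion{\channel}{\channelPMeasure_\channelOut}$; and a \emph{converse part}, that every achievable $\codebookRate$ satisfies $\codebookRate \geq \codebookRate^\star$, so that the closure of the achievable set remains inside $[\codebookRate^\star,\infty)$.

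\emph{Direct part.} Fix $\codebookRate > \codebookRate^\star$, choose $\channelPMeasure_\channelIn \in \inducingInputDistributions{\channelPMeasure_\channelOut}$ with $\mutualInformation{\channelIn}{\channelOut} < \codebookRate$, and fix $\gamma > 0$ with $\mutualInformation{\channelIn}{\channelOut} + \gamma < \codebookRate$. Draw $\codebook$ from $\codebookPMeasure_\codebook$ (all codeword components i.i.d.\ according to $\channelPMeasure_\channelIn$). Because $\channelPMeasure_\channelIn$ induces $\channelPMeasure_\channelOut$ and $\mutualInformation{\channelIn}{\channelOut} < \infty$, the information density is finite almost everywhere, $\kernelPower{\channelKernel}{\codebookBlocklength}(\codeword(\codewordIndex), \cdot) \absolutelyContinuous \channelPMeasure_{\channelOut^\codebookBlocklength}$ for $\channelPMeasure_{\channelIn^\codebookBlocklength}$-almost every codeword, and $\codebookPMeasure_{\channelOut^\codebookBlocklength | \codebook}$ admits the $\channelPMeasure_{\channelOut^\codebookBlocklength}$-density $g_\codebook = \exp(-\codebookBlocklength\codebookRate) \sum_\codewordIndex \exp(\informationDensity{\codeword(\codewordIndex)}{\cdot})$, with $\informationDensity{\channelInAlphElement^\codebookBlocklength}{\channelOutAlphElement^\codebookBlocklength} = \sum_{\blockIndex=1}^\codebookBlocklength \informationDensity{\channelInAlphElement_\blockIndex}{\channelOutAlphElement_\blockIndex}$; consequently $\totalvariation{\codebookPMeasure_{\channelOut^\codebookBlocklength | \codebook} - \channelPMeasure_{\channelOut^\codebookBlocklength}} = \int (1 - g_\codebook)^+ \, d\channelPMeasure_{\channelOut^\codebookBlocklength}$. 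Clipping each summand of $g_\codebook$ at the typicality level $\codebookBlocklength(\mutualInformation{\channelIn}{\channelOut} + \gamma)$ yields $\tilde g_\codebook \leq g_\codebook$, a sum over $\codewordIndex$ of i.i.d.\ bounded functions with pointwise mean at most $1$ with respect to $\channelPMeasure_{\channelOut^\codebookBlocklength}$; the atypical remainder integrates to $\Probability[\informationDensity{\channelIn^\codebookBlocklength}{\channelOut^\codebookBlocklength} > \codebookBlocklength(\mutualInformation{\channelIn}{\channelOut} + \gamma)]$, which vanishes by the weak law of large numbers (legitimate since $\mutualInformation{\channelIn}{\channelOut} < \infty$), and the concentration of $\int (1 - \tilde g_\codebook)^+ \, d\channelPMeasure_{\channelOut^\codebookBlocklength}$ about its (vanishing) expectation follows from the Chernoff--Hoeffding bound and gives a doubly exponentially small probability of an unsuitable codebook. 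This is precisely the continuous-alphabet subtlety pointed out in the introduction: one cannot take a union bound over the measurable sets $\alphSubset \in \sigmaAlgebraOut$, so the variational distance must be estimated directly through the integral above --- the content of Lemmas~\ref{lemma:typical} and~\ref{lemma:typical-helper}. Hence good rate-$\codebookRate$ codebooks exist for all sufficiently large $\codebookBlocklength$, so $\codebookRate$ is achievable and, taking the closure, $[\codebookRate^\star,\infty) \cap \reals \subseteq \capacityRegion{\channel}{\channelPMeasure_\channelOut}$.

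\emph{Converse part.} Let $\codebookRate$ be achievable, realised by codebooks $\codebook_\blocklengthIndex$ of block length $\codebookBlocklength_\blocklengthIndex$ with $\delta_\blocklengthIndex := \totalvariation{\codebookPMeasure_{\channelOut^{\codebookBlocklength_\blocklengthIndex} | \codebook_\blocklengthIndex} - \channelPMeasure_{\channelOut^{\codebookBlocklength_\blocklengthIndex}}} \to 0$. I would invoke our earlier elementary converse~\cite{FreyMACResolvabilityRegion}, specialised to the point-to-point case: applied to $\codebook_\blocklengthIndex$ it relates $\codebookRate$ to the mutual informations of the coordinate-wise marginals of $\codebookPMeasure_{\channelIn^{\codebookBlocklength_\blocklengthIndex} | \codebook_\blocklengthIndex}$ --- each of whose induced output marginals is within $\delta_\blocklengthIndex$ of $\channelPMeasure_\channelOut$ in variational distance, since marginalisation does not increase variational distance --- and produces a single-letter input distribution $\channelPMeasure_\channelIn^{(\blocklengthIndex)}$ whose induced output is within $\delta_\blocklengthIndex$ of $\channelPMeasure_\channelOut$ and with $\mutualInformationWrt{\channelPMeasure_{\channelIn, \channelOut}^{(\blocklengthIndex)}}{\channelIn}{\channelOut} \leq \codebookRate + \varepsilon_\blocklengthIndex$ for some $\varepsilon_\blocklengthIndex \to 0$; the single place at which the finiteness of the alphabets enters that argument is an entropy-continuity/approximation step, which is supplied here by the hypotheses that $\channelInAlph$ is compact and $\channelInAlphElement \mapsto \channelKernel(\channelInAlphElement, \alphSubset)$ is continuous. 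Since $\channelInAlph$ is compact Polish, the set of probability measures on $\channelInAlph$ is weakly compact, so along a subsequence $\channelPMeasure_\channelIn^{(\blocklengthIndex)} \to \channelPMeasure_\channelIn^\star$ weakly; continuity of $\channelInAlphElement \mapsto \channelKernel(\channelInAlphElement, \alphSubset)$ forces the induced outputs to converge weakly to the output induced by $\channelPMeasure_\channelIn^\star$, which must therefore coincide with $\channelPMeasure_\channelOut$ (the variational, hence weak, limit), and weak lower semicontinuity of mutual information gives $\mutualInformationWrt{\channelPMeasure_{\channelIn, \channelOut}^\star}{\channelIn}{\channelOut} \leq \codebookRate < \infty$. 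Thus $\channelPMeasure_\channelIn^\star \in \inducingInputDistributions{\channelPMeasure_\channelOut}$ and $\codebookRate \geq \codebookRate^\star$; taking closures finishes the identity.

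\emph{Expected main obstacle.} The substantive work is in the direct part: pushing the typical-part estimate through on a general Polish output alphabet, i.e.\ replacing the finite-alphabet union bound by the direct integral bound on the variational distance and carrying the Chernoff--Hoeffding argument all the way to a doubly exponential bound --- this is exactly what Lemmas~\ref{lemma:typical} and~\ref{lemma:typical-helper} are for. A secondary delicate point, in the converse, is verifying that the limiting single-letter distribution genuinely belongs to $\inducingInputDistributions{\channelPMeasure_\channelOut}$, i.e.\ that it both induces $\channelPMeasure_\channelOut$ and has finite mutual information; this is where compactness of $\channelInAlph$ and continuity of $\channelKernel$ are indispensable.
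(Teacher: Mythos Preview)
Your direct part matches the paper's: split the variational distance into typical and atypical pieces, control the atypical mass via the weak law of large numbers (this is Theorem~\ref{theorem:firstorder-weakened}, which is what the paper actually invokes here, since no moment-generating-function hypothesis is available in the statement), and control the typical piece via Lemmas~\ref{lemma:typical} and~\ref{lemma:typical-helper}.

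The converse, however, has a gap. You correctly locate the only finite-alphabet step of the earlier converse at an entropy-continuity bound (comparing $\entropyWrt{\codebookPMeasure_{\channelOut^{\codebookBlocklength_\blocklengthIndex}|\codebook_\blocklengthIndex}}{\channelOut^{\codebookBlocklength_\blocklengthIndex}}$ with $\codebookBlocklength_\blocklengthIndex\entropyWrt{\channelPMeasure_\channelOut^{(\blocklengthIndex)}}{\channelOut}$ via a Csisz\'ar--K\"orner-type inequality), but you misattribute its resolution: compactness of $\channelInAlph$ and continuity of $\channelInAlphElement\mapsto\channelKernel(\channelInAlphElement,\alphSubset)$ serve only the \emph{input-side} limiting argument---weak compactness of the input measures and weak continuity of the induced-output map---and do not furnish output-entropy continuity, which genuinely requires $\cardinality{\channelOutAlph}<\infty$. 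On a continuous output alphabet the inequality $\mutualInformationWrt{\channelPMeasure_{\channelIn,\channelOut}^{(\blocklengthIndex)}}{\channelIn}{\channelOut}\leq\codebookRate+\varepsilon_\blocklengthIndex$ cannot be obtained as you describe. The paper closes this by discretising the \emph{output}: it fixes an increasing sequence of finite sub-$\sigma$-algebras $(\sigmaAlgebraOut_\channelDiscretizationIndex)_{\channelDiscretizationIndex\geq 1}$ generating $\sigmaAlgebraOut$, applies the finite-output Lemma~\ref{lemma:converse-finite} (where your compactness and lower-semicontinuity steps live, and where the Csisz\'ar--K\"orner bound is legitimate) to each discretised channel, extracts nested subsequences so that the limiting joint measures $\channelPMeasure_{\channelIn,\channelOut}^{(\channelDiscretizationIndex)}$ are consistent across $\channelDiscretizationIndex$, extends them to $\sigmaAlgebraIn\sigmaAlgebraProduct\sigmaAlgebraOut$ by Carath\'eodory, and passes to the limit via $\mutualInformationWrt{\channelPMeasure_{\channelIn,\channelOut}^{(\channelDiscretizationIndex)}}{\channelIn}{\channelOut}\to\mutualInformationWrt{\channelPMeasure_{\channelIn,\channelOut}}{\channelIn}{\channelOut}$ along the increasing $\sigma$-algebras.
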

\begin{shownto}{conference}
The inclusion ``$\supseteq$'' is a direct consequence of Theorem~\ref{theorem:firstorder} and Remark~\ref{remark:firstorder} in Section~\ref{section:direct}. Theorem~\ref{theorem:firstorder} does not require the input alphabet to be compact and in many practically relevant cases, it even states that not only there exists a sequence of codebooks witnessing that the rate is achievable, but also that the probability of randomly drawing a ``bad'' codebook vanishes doubly exponentially  with increasing block length. The inclusion ``$\subseteq$'' is a consequence of Theorem~\ref{theorem:converse} proven in Section~\ref{section:converse}.
\end{shownto}
\begin{shownto}{arxiv}
The inclusion ``$\supseteq$'' is a direct consequence of Theorem~\ref{theorem:firstorder-weakened} in Section~\ref{section:direct}, which is a variation of Theorem~\ref{theorem:firstorder}, our main direct result. Theorems~\ref{theorem:firstorder} and~\ref{theorem:firstorder-weakened}  do not require the input alphabet to be compact and in many practically relevant cases, Theorem~\ref{theorem:firstorder} even states that not only there exists a sequence of codebooks witnessing that the rate is achievable, but also that the probability of randomly drawing a ``bad'' codebook vanishes doubly exponentially  with increasing block length. The inclusion ``$\subseteq$'' is a consequence of Theorem~\ref{theorem:converse} proven in Section~\ref{section:converse}.
\end{shownto}

\section{Direct Results}
\label{section:direct}
The main result of this section is the following theorem.
\begin{theorem}
\label{theorem:firstorder}
Given a channel $\channel = ((\channelInAlph, \sigmaAlgebraIn), (\channelOutAlph, \sigmaAlgebraOut), \channelKernel)$, an input distribution $\channelPMeasure_\channelIn$ such that the moment-generating function $\Expectation_{\channelPMeasure_{\channelIn, \channelOut}} \exp(\generalReal \cdot \informationDensity{\channelIn}{\channelOut})$ of the information density exists and is finite for some $\generalReal > 0$, and $\codebookRate > \mutualInformation{\channelIn}{\channelOut}$, there exist $\finalconstOne > 0$ and $\finalconstTwo > 0$ such that for large enough block lengths $\codebookBlocklength$, the randomized codebook distributions of block length $\codebookBlocklength$ and rate $\codebookRate$ satisfy
\begin{multline}
\label{thm:resdirect-probstatement}
\codebookPMeasure_\codebook \left(
  \totalvariation{
    \codebookPMeasure_{\channelOut^\codebookBlocklength | \codebook} - \channelPMeasure_{\channelOut^\codebookBlocklength}
  }
  >
  \exp(-\finalconstOne\codebookBlocklength)
\right)
\\
\leq
\exp\left(-\exp\left(\finalconstTwo\codebookBlocklength\right)\right).
\end{multline}
\end{theorem}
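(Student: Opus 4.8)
The plan is to represent the variational distance as an integral of a Radon--Nikodym derivative that, over the random codebook, is an average of i.i.d.\ ``likelihood'' terms, to bound its expectation by a truncation/typicality split, and then to promote this to the doubly-exponential tail bound via a bounded-differences concentration inequality applied directly to the variational-distance functional. Write $M:=\exp(\codebookBlocklength\codebookRate)$. The moment-generating-function hypothesis forces $\informationDensity{\channelIn}{\channelOut}<\infty$ $\channelPMeasure_{\channelIn,\channelOut}$-a.s., so $\channelKernel(\channelInAlphElement,\cdot)\absolutelyContinuous\channelPMeasure_\channelOut$ for $\channelPMeasure_\channelIn$-a.e.\ $\channelInAlphElement$ and hence $\codebookPMeasure_{\channelOut^\codebookBlocklength|\codebook}\absolutelyContinuous\channelPMeasure_{\channelOut^\codebookBlocklength}$ for $\codebookPMeasure_\codebook$-a.e.\ $\codebook$; for such $\codebook$ the derivative $\Psi:=\RNDerivative{\codebookPMeasure_{\channelOut^\codebookBlocklength|\codebook}}{\channelPMeasure_{\channelOut^\codebookBlocklength}}$ is the map $\channelOutAlphElement^\codebookBlocklength\mapsto M^{-1}\sum_{\codewordIndex=1}^{M}\exp(\informationDensity{\codeword(\codewordIndex)}{\channelOutAlphElement^\codebookBlocklength})$, an average of $M$ i.i.d.\ nonnegative terms each of mean $1$ at $\channelPMeasure_{\channelOut^\codebookBlocklength}$-a.e.\ point (since $\channelPMeasure_\channelOut$ is induced by $\channelPMeasure_\channelIn$). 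Thus $\Expectation_{\codebookPMeasure_\codebook}\codebookPMeasure_{\channelOut^\codebookBlocklength|\codebook}=\channelPMeasure_{\channelOut^\codebookBlocklength}$ and, by the definition of the variational distance, $\totalvariation{\codebookPMeasure_{\channelOut^\codebookBlocklength|\codebook}-\channelPMeasure_{\channelOut^\codebookBlocklength}}=\int\positive{\Psi-1}\,d\channelPMeasure_{\channelOut^\codebookBlocklength}$.

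Next I would bound the mean $\Expectation_{\codebookPMeasure_\codebook}\totalvariation{\codebookPMeasure_{\channelOut^\codebookBlocklength|\codebook}-\channelPMeasure_{\channelOut^\codebookBlocklength}}$. Fix $\mu$ with $\mutualInformation{\channelIn}{\channelOut}<\mu<\codebookRate$, truncate each term $\exp(\informationDensity{\codeword(\codewordIndex)}{\cdot})$ at level $\exp(\codebookBlocklength\mu)$, let $\widetilde\Psi$ be the average of the truncated terms, and use $\positive{\Psi-1}\le\positive{\widetilde\Psi-1}+(\Psi-\widetilde\Psi)$ with $\Psi-\widetilde\Psi\ge0$. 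After integrating and taking $\Expectation_{\codebookPMeasure_\codebook}$: the remainder contributes at most $\Probability(\tfrac{1}{\codebookBlocklength}\sum_{\blockIndex=1}^{\codebookBlocklength}\informationDensity{\channelIn_\blockIndex}{\channelOut_\blockIndex}>\mu)$ with $(\channelIn_\blockIndex,\channelOut_\blockIndex)$ i.i.d.\ $\channelPMeasure_{\channelIn,\channelOut}$, which the Chernoff bound makes $\le\exp(-\codebookBlocklength E_\mu)$ for some $E_\mu>0$ (the moment-generating-function hypothesis, together with $\mu$ exceeding the mean $\mutualInformation{\channelIn}{\channelOut}$ of the information density, is exactly what yields a positive exponent); and since $\Expectation_{\codebookPMeasure_\codebook}\widetilde\Psi\le1$ a.e., $\positive{\widetilde\Psi-1}\le\absolute{\widetilde\Psi-\Expectation_{\codebookPMeasure_\codebook}\widetilde\Psi}$, whose expectation at each point is at most $\sqrt{\mathrm{Var}_{\codebookPMeasure_\codebook}\widetilde\Psi}\le\sqrt{\exp(\codebookBlocklength\mu)/M}=\exp(-\tfrac{\codebookBlocklength}{2}(\codebookRate-\mu))$, because the variance of an $M$-term i.i.d.\ average of summands in $[0,\exp(\codebookBlocklength\mu)]$ with mean at most $1$ is at most $\exp(\codebookBlocklength\mu)/M$. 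Integrating over $\channelPMeasure_{\channelOut^\codebookBlocklength}$ gives $\Expectation_{\codebookPMeasure_\codebook}\totalvariation{\codebookPMeasure_{\channelOut^\codebookBlocklength|\codebook}-\channelPMeasure_{\channelOut^\codebookBlocklength}}\le\exp(-\tfrac{\codebookBlocklength}{2}(\codebookRate-\mu))+\exp(-\codebookBlocklength E_\mu)$.

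For the concentration I would observe that $\codebook\mapsto\totalvariation{\codebookPMeasure_{\channelOut^\codebookBlocklength|\codebook}-\channelPMeasure_{\channelOut^\codebookBlocklength}}$ is a measurable function of the $M$ independent codewords and that replacing one codeword perturbs $\codebookPMeasure_{\channelOut^\codebookBlocklength|\codebook}$ by at most $1/M$ in variational distance, hence perturbs the functional by at most $1/M$; McDiarmid's bounded-differences inequality then gives, for all $s>0$, that $\codebookPMeasure_\codebook(\totalvariation{\codebookPMeasure_{\channelOut^\codebookBlocklength|\codebook}-\channelPMeasure_{\channelOut^\codebookBlocklength}}\ge\Expectation_{\codebookPMeasure_\codebook}\totalvariation{\codebookPMeasure_{\channelOut^\codebookBlocklength|\codebook}-\channelPMeasure_{\channelOut^\codebookBlocklength}}+s)\le\exp(-2Ms^2)$. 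Choosing $\finalconstOne$ with $0<\finalconstOne<\min\{\tfrac{1}{2}(\codebookRate-\mu),\,E_\mu,\,\tfrac{1}{2}\codebookRate\}$ and $s=\tfrac{1}{2}\exp(-\finalconstOne\codebookBlocklength)$, the mean bound of the previous paragraph is $\le\tfrac{1}{2}\exp(-\finalconstOne\codebookBlocklength)$ for all large $\codebookBlocklength$, so the probability in~\eqref{thm:resdirect-probstatement} is at most $\exp(-2M(\tfrac{1}{2}\exp(-\finalconstOne\codebookBlocklength))^2)=\exp(-\tfrac{1}{2}\exp((\codebookRate-2\finalconstOne)\codebookBlocklength))$; since $\codebookRate-2\finalconstOne>0$, this is $\le\exp(-\exp(\finalconstTwo\codebookBlocklength))$ for any $\finalconstTwo\in(0,\codebookRate-2\finalconstOne)$ and all large $\codebookBlocklength$, which is the claim.

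I expect the concentration step to be the main obstacle, precisely because the alphabets are continuous: the finite-alphabet route---a Chernoff/Hoeffding bound for $\widetilde\Psi(\channelOutAlphElement^\codebookBlocklength)$ at each output sequence followed by a union bound over $\channelOutAlph^\codebookBlocklength$---is unavailable since $\channelOutAlph^\codebookBlocklength$ is uncountable, so one must instead concentrate the variational-distance functional globally around its mean via bounded differences rather than controlling $\widetilde\Psi$ uniformly over output sequences. Two smaller points that need care under the Polish-space assumption are the measurability of the variational-distance functional on the space of codebooks (needed both for McDiarmid and for the probability in~\eqref{thm:resdirect-probstatement} to be well defined) and the almost-sure absolute continuity $\codebookPMeasure_{\channelOut^\codebookBlocklength|\codebook}\absolutelyContinuous\channelPMeasure_{\channelOut^\codebookBlocklength}$ underlying the representation of $\Psi$; both follow from standard facts but should be verified.
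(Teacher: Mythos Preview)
Your argument is correct, but it takes a genuinely different route from the paper's. Both proofs start from the same representation $\totalvariation{\codebookPMeasure_{\channelOut^\codebookBlocklength|\codebook}-\channelPMeasure_{\channelOut^\codebookBlocklength}}=\int\positive{\Psi-1}\,d\channelPMeasure_{\channelOut^\codebookBlocklength}$ and the same typical/atypical truncation of the likelihood terms at level $\exp(\codebookBlocklength\mu)$. From there the paper treats the two parts of the \emph{tail probability} directly: the atypical mass $\codebookPMeasure_{2,\codebook}(\channelOutAlph^\codebookBlocklength)$ is a mean of $M$ i.i.d.\ $[0,1]$-valued random variables and is handled by Chernoff--Hoeffding (Lemma~\ref{lemma:atypical}); for the typical part, in lieu of the unavailable union bound over $\channelOutAlph^\codebookBlocklength$, the paper uses the trick of Lemma~\ref{lemma:typical-helper} (Chernoff bound followed by Jensen and Fubini) to pull the $\channelPMeasure_{\channelOut^\codebookBlocklength}$-integral outside the exponential, and then bounds $\Expectation_{\codebookPMeasure_\codebook}\exp(\chernoffParam\positive{\widetilde\Psi(\channelOutAlphElement^\codebookBlocklength)-1})$ at each fixed $\channelOutAlphElement^\codebookBlocklength$ via a pointwise Chernoff--Hoeffding tail and an explicit Gaussian-type integral (Lemma~\ref{lemma:typical}). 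You instead bound only the \emph{mean} of the variational distance via the same split (variance for the truncated part, Chernoff for the remainder) and obtain the doubly-exponential tail in one stroke from McDiarmid, exploiting that the variational-distance functional has bounded differences $1/M$ in the codewords. Your route is shorter and more modular; the paper's route, by keeping track of the pointwise Chernoff--Hoeffding exponent, yields the explicit nonasymptotic expression in Lemma~\ref{lemma:typical} that is reused essentially verbatim in the second-order Theorem~\ref{theorem:secondorder}, which a bare McDiarmid argument would not deliver with the same constants.
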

\showto{arxiv}{With a slight refinement of the proof, we can also establish the following second-order result.}
\showto{conference}{With a slight refinement of the proof, we can also establish the following second-order result. For details, we refer the reader to~\cite{arxivVersion}.}
\begin{theorem}
\label{theorem:secondorder}
Given a channel $\channel = ((\channelInAlph, \sigmaAlgebraIn), (\channelOutAlph, \sigmaAlgebraOut), \channelKernel)$, an input distribution $\channelPMeasure_\channelIn$ such that the information density $\informationDensity{\channelIn}{\channelOut}$ has finite central second moment $\channelDispersion$ and finite absolute third moment $\channelThirdMoment$, $\secondOrderTVdiff>0$ and $\secondOrderParamC>1$,
\showto{arxiv}{
suppose the rate $\codebookRate$ depends on $\codebookBlocklength$ in the following way:
\begin{align}
\label{def:secondorder-rate}
\codebookRate
=
\mutualInformation{\channelIn}{\channelOut}
+
\sqrt{\frac{\channelDispersion}{\codebookBlocklength}} \normalcdfComplementInverse(\secondOrderTVdiff)
+
\secondOrderParamC
\frac{\log \codebookBlocklength}{\codebookBlocklength},
\end{align}
}
\showto{conference}{
suppose that the rate is
$
\codebookRate
=
\mutualInformation{\channelIn}{\channelOut}
+
\sqrt{\channelDispersion/\codebookBlocklength} \normalcdfComplementInverse(\secondOrderTVdiff)
+
\secondOrderParamC
\log \codebookBlocklength/\codebookBlocklength,
$
}
where $\normalcdfComplement := 1 - \normalcdf$ with $\normalcdf$ the distribution function of the standard normal density. Then, for any $\secondOrderParamD \in (0, \secondOrderParamC-1)$ and $\codebookBlocklength$ that satisfy $\codebookBlocklength^{(\secondOrderParamC-\secondOrderParamD)/2} \geq 6$,
we have
\begin{multline}
\label{theorem:secondorder-probstatement}
\codebookPMeasure_\codebook \left(
  \totalvariation{
    \codebookPMeasure_{\channelOut^\codebookBlocklength | \codebook} - \channelPMeasure_{\channelOut^\codebookBlocklength}
  }
  >
  \lemmaexpectation\left(1+\frac{1}{\sqrt{\codebookBlocklength}}\right)
  +
  \frac{1}{\sqrt{\codebookBlocklength}}
\right)
\\
\begin{aligned}
\leq
&\exp\left(
  -\frac{1}{3}
  \codebookBlocklength
  \lemmaexpectation
  \exp(\codebookBlocklength\codebookRate)
\right)
\\
&+
\left(
  \frac{7}{6}
  +
  \sqrt{3\pi/2}
  \exp\left(
    \frac{3}{4}
  \right)
\right)
\exp\left(
  -\codebookBlocklength^{
    \frac{1}{2}
    (\secondOrderParamC - \secondOrderParamD - 1)
  }
\right),
\end{aligned}
\end{multline}
where
\showto{arxiv}{
\begin{align*}
\lemmaexpectation
:=
\normalcdfComplement\left(
  \normalcdfComplementInverse(\secondOrderTVdiff)
  +
  \secondOrderParamD
  \frac{\log \codebookBlocklength}{\sqrt{\codebookBlocklength\channelDispersion}}
\right)
+
\frac{\channelThirdMoment}
     {\channelDispersion^{\frac{3}{2}} \sqrt{\codebookBlocklength}}
\end{align*}
tends to $\secondOrderTVdiff$ for $\codebookBlocklength \rightarrow \infty$.
}
\showto{conference}{
$
\lemmaexpectation
:=
\normalcdfComplement\left(
  \normalcdfComplementInverse(\secondOrderTVdiff)
  +
  \secondOrderParamD
  \frac{\log \codebookBlocklength}{\sqrt{\codebookBlocklength\channelDispersion}}
\right)
+
\frac{\channelThirdMoment}
     {\channelDispersion^{\frac{3}{2}} \sqrt{\codebookBlocklength}}
$.
}
\end{theorem}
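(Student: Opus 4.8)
The plan is to run the proof of Theorem~\ref{theorem:firstorder} essentially unchanged and to refine only the estimate of the ``atypical'' contribution, replacing the crude Chernoff tail bound on the information density there by the Berry--Esseen theorem so that the term $\sqrt{\channelDispersion/\codebookBlocklength}\,\normalcdfComplementInverse(\secondOrderTVdiff)$ in the rate is accounted for exactly.

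\emph{The typical/atypical split.}
As in Theorem~\ref{theorem:firstorder}, for each codeword $\codeword(\codewordIndex)$ I would introduce the self-information-typical output set $\typicalSet{\codewordIndex}:=\{\channelOutAlphElement^{\codebookBlocklength}\in\channelOutAlph^{\codebookBlocklength}:\sum_{\blockIndex=1}^{\codebookBlocklength}\informationDensity{\codeword_\blockIndex(\codewordIndex)}{\channelOutAlphElement_\blockIndex}\leq\gamma_{\codebookBlocklength}\}$, let $\widetilde\codebookPMeasure_{\channelOut^{\codebookBlocklength}|\codebook}$ denote the codebook-induced output distribution with each of its $\exp(\codebookBlocklength\codebookRate)$ summands truncated to the corresponding $\typicalSet{\codewordIndex}$, and split
\begin{multline*}
\totalvariation{\codebookPMeasure_{\channelOut^{\codebookBlocklength}|\codebook}-\channelPMeasure_{\channelOut^{\codebookBlocklength}}}
\\
\leq
\totalvariation{\codebookPMeasure_{\channelOut^{\codebookBlocklength}|\codebook}-\widetilde\codebookPMeasure_{\channelOut^{\codebookBlocklength}|\codebook}}
+
\totalvariation{\widetilde\codebookPMeasure_{\channelOut^{\codebookBlocklength}|\codebook}-\channelPMeasure_{\channelOut^{\codebookBlocklength}}}.
\end{multline*}
The decisive choice is $\gamma_{\codebookBlocklength}:=\codebookBlocklength\,\mutualInformation{\channelIn}{\channelOut}+\sqrt{\codebookBlocklength\channelDispersion}\,\normalcdfComplementInverse(\secondOrderTVdiff)+\secondOrderParamD\log\codebookBlocklength$, which by the assumed form of $\codebookRate$ lies exactly $(\secondOrderParamC-\secondOrderParamD)\log\codebookBlocklength$ below $\codebookBlocklength\codebookRate$; consequently, on $\typicalSet{\codewordIndex}$ the density $d\kernelPower{\channelKernel}{\codebookBlocklength}(\codeword(\codewordIndex),\cdot)/d\channelPMeasure_{\channelOut^{\codebookBlocklength}}$ is at most $\exp(\gamma_{\codebookBlocklength})=\codebookBlocklength^{-(\secondOrderParamC-\secondOrderParamD)}\exp(\codebookBlocklength\codebookRate)$, which is precisely the quantity the Chernoff--Hoeffding estimate sees.

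\emph{The typical part.}
Here I would invoke Lemmas~\ref{lemma:typical} and~\ref{lemma:typical-helper} exactly as in Theorem~\ref{theorem:firstorder}: in the continuous setting one cannot take a union bound of the Chernoff--Hoeffding estimate over all output events, so instead one applies the bound pointwise in the output symbol and integrates the result against $\channelPMeasure_{\channelOut^{\codebookBlocklength}}$. With the bound $\exp(\gamma_{\codebookBlocklength})$ above and the two hypotheses $\codebookBlocklength^{(\secondOrderParamC-\secondOrderParamD)/2}\geq6$ and $\secondOrderParamD<\secondOrderParamC-1$ (so that $\codebookBlocklength^{-(\secondOrderParamC-\secondOrderParamD)/2}\leq\min\{1/6,\codebookBlocklength^{-1/2}\}$), tracking all constants yields that, outside a codebook event whose probability is bounded by the right-hand side of~\eqref{theorem:secondorder-probstatement}, the first (atypical) term of the split is at most $\lemmaexpectation(1+\codebookBlocklength^{-1/2})$ and the second (typical) term is at most $\codebookBlocklength^{-1/2}$, where $\lemmaexpectation$ may be taken to be any upper bound on the atypical mass $\channelPMeasure_{\channelIn^{\codebookBlocklength},\channelOut^{\codebookBlocklength}}(\{\sum_{\blockIndex=1}^{\codebookBlocklength}\informationDensity{\channelIn_\blockIndex}{\channelOut_\blockIndex}>\gamma_{\codebookBlocklength}\})$. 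The triangle inequality then bounds the total variational distance by $\lemmaexpectation(1+\codebookBlocklength^{-1/2})+\codebookBlocklength^{-1/2}$, which is exactly the complement of the event in~\eqref{theorem:secondorder-probstatement}.

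\emph{The atypical part via Berry--Esseen.}
This is where the second-order content enters. Under $\channelPMeasure_{\channelIn^{\codebookBlocklength},\channelOut^{\codebookBlocklength}}$ the centered increments $\informationDensity{\channelIn_\blockIndex}{\channelOut_\blockIndex}-\mutualInformation{\channelIn}{\channelOut}$ are i.i.d.\ with zero mean, variance $\channelDispersion$ and absolute third moment $\channelThirdMoment$, and by the choice of $\gamma_{\codebookBlocklength}$ we have $\gamma_{\codebookBlocklength}-\codebookBlocklength\,\mutualInformation{\channelIn}{\channelOut}=\sqrt{\codebookBlocklength\channelDispersion}\,\normalcdfComplementInverse(\secondOrderTVdiff)+\secondOrderParamD\log\codebookBlocklength$, so the Berry--Esseen theorem gives
\begin{align*}
&\channelPMeasure_{\channelIn^{\codebookBlocklength},\channelOut^{\codebookBlocklength}}\!\left(\sum_{\blockIndex=1}^{\codebookBlocklength}\informationDensity{\channelIn_\blockIndex}{\channelOut_\blockIndex}>\gamma_{\codebookBlocklength}\right)
\\
&\qquad\leq
\normalcdfComplement\!\left(\normalcdfComplementInverse(\secondOrderTVdiff)+\secondOrderParamD\frac{\log\codebookBlocklength}{\sqrt{\codebookBlocklength\channelDispersion}}\right)
+
\frac{\channelThirdMoment}{\channelDispersion^{3/2}\sqrt{\codebookBlocklength}}
=
\lemmaexpectation.
\end{align*}
Substituting this $\lemmaexpectation$ into the preceding step completes the proof, and $\lemmaexpectation\to\secondOrderTVdiff$ since $\normalcdfComplementInverse(\secondOrderTVdiff)+\secondOrderParamD(\log\codebookBlocklength)/\sqrt{\codebookBlocklength\channelDispersion}\to\normalcdfComplementInverse(\secondOrderTVdiff)$, $\normalcdfComplement$ is continuous, and the Berry--Esseen remainder vanishes. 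The one genuinely delicate point is not new to this theorem: it is the continuous-alphabet handling of the typical part, where the union bound over output events is unavailable and the Chernoff--Hoeffding bound must be pushed through the integral representation of the variational distance, as in Lemmas~\ref{lemma:typical} and~\ref{lemma:typical-helper}. Everything added for the second-order statement is bookkeeping --- placing $\gamma_{\codebookBlocklength}$ so that the slack terms land at order $\codebookBlocklength^{-1/2}$, verifying that $\codebookBlocklength^{(\secondOrderParamC-\secondOrderParamD)/2}\geq6$ is exactly what the Chernoff exponent needs, and checking that Berry--Esseen produces precisely $\lemmaexpectation$ --- but it is the part most prone to slips because every constant must be kept explicit.
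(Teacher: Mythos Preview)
Your proposal is correct and follows essentially the same approach as the paper: the same typical/atypical split with the threshold $\gamma_\codebookBlocklength$ chosen so that $\codebookBlocklength\codebookRate-\gamma_\codebookBlocklength=(\secondOrderParamC-\secondOrderParamD)\log\codebookBlocklength$, Berry--Esseen to bound the atypical mass by $\lemmaexpectation$, Lemma~\ref{lemma:atypical} with $\lemmaconst=1/\sqrt{\codebookBlocklength}$ for the atypical term, Lemma~\ref{lemma:typical} (with $\chernoffParam=\exp(\tfrac{\codebookBlocklength}{2}(\codebookRate-\mutualInformation{\channelIn}{\channelOut}-\typicalityParam))=\codebookBlocklength^{(\secondOrderParamC-\secondOrderParamD)/2}$ and $\lemmaconst=1/\sqrt{\codebookBlocklength}$) for the typical term, and the union bound to combine. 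The condition $\codebookBlocklength^{(\secondOrderParamC-\secondOrderParamD)/2}\geq 6$ is exactly the hypothesis $\rateInfSubst/(6\chernoffParam)\geq 1$ of Lemma~\ref{lemma:typical}, as you note.
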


\showto{conference}{In order to prove Theorem~\ref{theorem:firstorder},}
\showto{arxiv}{In order to prove these theorems,}
given a codebook $\codebook$, we write the variational distance as
\begin{shownto}{arxiv}
\begin{align}
&\hphantom{{}={}}
\totalvariation{
  \codebookPMeasure_{\channelOut^\codebookBlocklength | \codebook} - \channelPMeasure_{\channelOut^\codebookBlocklength}
}
\nonumber
\\
&=
\sup\limits_{\alphSubset\in\sigmaAlgebraPower{\sigmaAlgebraOut}{\codebookBlocklength}} \left(
  \codebookPMeasure_{\channelOut^\codebookBlocklength | \codebook}(\alphSubset)
  -
  \channelPMeasure_{\channelOut^\codebookBlocklength}(\alphSubset)
\right)
\nonumber
\\
&=
\sup\limits_{\alphSubset\in\sigmaAlgebraPower{\sigmaAlgebraOut}{\codebookBlocklength}}
  \int\nolimits_\alphSubset \left(
    \RNDerivative{\codebookPMeasure_{\channelOut^\codebookBlocklength | \codebook}}{\channelPMeasure_{\channelOut^\codebookBlocklength}}(\channelOutAlphElement^\codebookBlocklength) - 1
  \right)
  \channelPMeasure_{\channelOut^\codebookBlocklength}(d\channelOutAlphElement^\codebookBlocklength)
\nonumber
\\
&=
\Expectation_{\channelPMeasure_{\channelOut^\codebookBlocklength}}
  \positive{ \RNDerivative{\codebookPMeasure_{\channelOut^\codebookBlocklength | \codebook}}{\channelPMeasure_{\channelOut^\codebookBlocklength}}(\channelOutAlphElement^\codebookBlocklength) - 1 }.
\label{totalvariation_positive}
\end{align}
\end{shownto}
\begin{shownto}{conference}
\begin{align}
&\hphantom{{}={}}
\totalvariation{
  \codebookPMeasure_{\channelOut^\codebookBlocklength | \codebook} - \channelPMeasure_{\channelOut^\codebookBlocklength}
}
\nonumber
=
\sup\limits_{\alphSubset\in\sigmaAlgebraPower{\sigmaAlgebraOut}{\codebookBlocklength}} \left(
  \codebookPMeasure_{\channelOut^\codebookBlocklength | \codebook}(\alphSubset)
  -
  \channelPMeasure_{\channelOut^\codebookBlocklength}(\alphSubset)
\right)
\nonumber
\\
&=
\sup\limits_{\alphSubset\in\sigmaAlgebraPower{\sigmaAlgebraOut}{\codebookBlocklength}}
  \int\nolimits_\alphSubset \left(
    \RNDerivative{\codebookPMeasure_{\channelOut^\codebookBlocklength | \codebook}}{\channelPMeasure_{\channelOut^\codebookBlocklength}}(\channelOutAlphElement^\codebookBlocklength) - 1
  \right)
  \channelPMeasure_{\channelOut^\codebookBlocklength}(d\channelOutAlphElement^\codebookBlocklength)
\nonumber
\\
&=
\Expectation_{\channelPMeasure_{\channelOut^\codebookBlocklength}}
  \positive{ \RNDerivative{\codebookPMeasure_{\channelOut^\codebookBlocklength | \codebook}}{\channelPMeasure_{\channelOut^\codebookBlocklength}}(\channelOutAlphElement^\codebookBlocklength) - 1 }.
\label{totalvariation_positive}
\end{align}
\end{shownto}


Note that throughout the proofs, we only consider codebooks $\codebook$ for which $\codebookPMeasure_{\channelOut^\codebookBlocklength | \codebook}$ is absolutely continuous with respect to $\channelPMeasure_{\channelOut^\codebookBlocklength}$. We can do this because the existence of a finite mutual information implies that $\channelKernel(\channelInAlphElement, \cdot)$ is absolutely continuous with respect to $\channelPMeasure_{\channelOut}$ for almost every $\channelInAlphElement$, and so the probability of drawing a codebook for which $\codebookPMeasure_{\channelOut^\codebookBlocklength | \codebook}$ is not absolutely continuous with respect to $\channelPMeasure_{\channelOut^\codebookBlocklength}$ is $0$. Similarly, we assume the existence of the other Radon-Nikodym derivatives that appear.

\showto{arxiv}{
We define the typical set
\begin{align}
\label{def:typicalset}
\typicalSet{\typicalityParam} := \left\{ (\channelInAlphElement^\codebookBlocklength,\channelOutAlphElement^\codebookBlocklength): \frac{1}{\codebookBlocklength} \informationDensity{\channelInAlphElement^\codebookBlocklength}{\channelOutAlphElement^\codebookBlocklength} \leq \mutualInformation{\channelIn}{\channelOut} + \typicalityParam \right\}
\end{align}
and split $\codebookPMeasure_{\channelOut^\codebookBlocklength | \codebook}$ into two measures
\begin{align}
\label{def:measure-typical-part}
&\begin{aligned}
&\codebookPMeasure_{1,\codebook}(\alphSubset)
:=
\exp(-\codebookBlocklength\codebookRate)
\\
&~~\cdot\sum\limits_{\codewordIndex=1}^{\exp(\codebookBlocklength\codebookRate)} \kernelPower{\channelKernel}{\codebookBlocklength}\left(
  \codeword(\codewordIndex),
  \alphSubset \cap \{ \channelOutAlphElement^\codebookBlocklength: (\codeword(\codewordIndex), \channelOutAlphElement^\codebookBlocklength) \in \typicalSet{\typicalityParam}\}
\right)
\end{aligned}
\\
\label{def:measure-atypical-part}
&\begin{aligned}
&\codebookPMeasure_{2,\codebook}(\alphSubset)
:=
\exp(-\codebookBlocklength\codebookRate)
\\
&~~\cdot\sum\limits_{\codewordIndex=1}^{\exp(\codebookBlocklength\codebookRate)} \kernelPower{\channelKernel}{\codebookBlocklength}\left(
  \codeword(\codewordIndex), \alphSubset \cap \{ \channelOutAlphElement^\codebookBlocklength: (\codeword(\codewordIndex), \channelOutAlphElement^\codebookBlocklength) \notin \typicalSet{\typicalityParam}\}
\right).
\end{aligned}
\end{align}
}
\showto{conference}{
We define the typical set
$
\typicalSet{\typicalityParam} := \{ (\channelInAlphElement^\codebookBlocklength,\channelOutAlphElement^\codebookBlocklength): \informationDensity{\channelInAlphElement^\codebookBlocklength}{\channelOutAlphElement^\codebookBlocklength}/\codebookBlocklength \leq \mutualInformation{\channelIn}{\channelOut} + \typicalityParam \}
$
and split $\codebookPMeasure_{\channelOut^\codebookBlocklength | \codebook}$ into two measures
$
\codebookPMeasure_{1,\codebook}(\alphSubset)
:=
\exp(-\codebookBlocklength\codebookRate)
\sum\nolimits_{\codewordIndex=1}^{\exp(\codebookBlocklength\codebookRate)} \kernelPower{\channelKernel}{\codebookBlocklength}(
  \codeword(\codewordIndex),
  \alphSubset \cap \{ \channelOutAlphElement^\codebookBlocklength: (\codeword(\codewordIndex), \channelOutAlphElement^\codebookBlocklength) \in \typicalSet{\typicalityParam}\}
)
$
and
$
\codebookPMeasure_{2,\codebook}(\alphSubset)
:=
\exp(-\codebookBlocklength\codebookRate)
\sum\nolimits_{\codewordIndex=1}^{\exp(\codebookBlocklength\codebookRate)} \kernelPower{\channelKernel}{\codebookBlocklength}(
  \codeword(\codewordIndex), \alphSubset \cap \{ \channelOutAlphElement^\codebookBlocklength: (\codeword(\codewordIndex), \channelOutAlphElement^\codebookBlocklength) \notin \typicalSet{\typicalityParam}\}
)
$.
}
We observe $\codebookPMeasure_{\channelOut^\codebookBlocklength | \codebook} = \codebookPMeasure_{1,\codebook} + \codebookPMeasure_{2,\codebook}$, which allows us to split~(\ref{totalvariation_positive}) into a typical and an atypical part
\begin{align}
&\hphantom{{}={}}
\totalvariation{\codebookPMeasure_{\channelOut^\codebookBlocklength | \codebook} - \channelPMeasure_{\channelOut^\codebookBlocklength}}
\nonumber
\\
&=
\Expectation_{\channelPMeasure_{\channelOut^\codebookBlocklength}}
  \positive{ \RNDerivative{\codebookPMeasure_{1,\codebook}}{\channelPMeasure_{\channelOut^\codebookBlocklength}}(\channelOutAlphElement^\codebookBlocklength) + \RNDerivative{\codebookPMeasure_{2,\codebook}}{\channelPMeasure_{\channelOut^\codebookBlocklength}}(\channelOutAlphElement^\codebookBlocklength) - 1 }
 \nonumber
\\
&\leq \label{split-typical-atypical}
\Expectation_{\channelPMeasure_{\channelOut^\codebookBlocklength}}
  \positive{
    \RNDerivative{\codebookPMeasure_{1,\codebook}}{\channelPMeasure_{\channelOut^\codebookBlocklength}}(\channelOutAlphElement^\codebookBlocklength)
    -
    1
  } 
+
\codebookPMeasure_{2,\codebook}(\channelOutAlph^\codebookBlocklength).
\end{align}
We next state and prove two lemmas that we will use as tools to bound the typical and atypical parts of this term separately.
\begin{lemma}[Bound for atypical terms]
\label{lemma:atypical}
Suppose
$\channelPMeasure_{\channelIn^\codebookBlocklength, \channelOut^\codebookBlocklength}
(\channelInAlph^\codebookBlocklength \times \channelOutAlph^\codebookBlocklength \setminus \typicalSet{\typicalityParam})
\leq
\lemmaexpectation$ and $\lemmaconst \in [0,1]$.
Then
\begin{align*}
\codebookPMeasure_\codebook\big(
  \codebookPMeasure_{2,\codebook}(\channelOutAlph^\codebookBlocklength) > \lemmaexpectation(1+\lemmaconst)
\big)
\leq
\exp\left(
  -\frac{1}{3} \lemmaconst^2 \lemmaexpectation \exp(\codebookBlocklength\codebookRate)
\right).
\end{align*}
\end{lemma}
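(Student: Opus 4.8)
The plan is to recognize $\codebookPMeasure_{2,\codebook}(\channelOutAlph^\codebookBlocklength)$ as a normalized sum of $\exp(\codebookBlocklength\codebookRate)$ i.i.d.\ random variables taking values in $[0,1]$ and to apply the Chernoff--Hoeffding bound in its multiplicative form. From the definition~\eqref{def:measure-atypical-part} of $\codebookPMeasure_{2,\codebook}$,
\[
\codebookPMeasure_{2,\codebook}(\channelOutAlph^\codebookBlocklength)
=
\exp(-\codebookBlocklength\codebookRate) \sum_{\codewordIndex=1}^{\exp(\codebookBlocklength\codebookRate)} Z_\codewordIndex,
\qquad
Z_\codewordIndex
:=
\kernelPower{\channelKernel}{\codebookBlocklength}\left(
  \codeword(\codewordIndex),\, \{ \channelOutAlphElement^\codebookBlocklength: (\codeword(\codewordIndex),\channelOutAlphElement^\codebookBlocklength) \notin \typicalSet{\typicalityParam} \}
\right).
\]
Under $\codebookPMeasure_\codebook$ the codewords $\codeword(1),\dots,\codeword(\exp(\codebookBlocklength\codebookRate))$ are i.i.d.\ with law $\channelPMeasure_{\channelIn^\codebookBlocklength}$, so $Z_1,Z_2,\dots$ are i.i.d.; each lies in $[0,1]$ because $\kernelPower{\channelKernel}{\codebookBlocklength}(\codeword(\codewordIndex),\cdot)$ is a probability measure, and $Z_\codewordIndex$ is a measurable function of the codeword thanks to the measurability conventions for $\channelKernel$ and the information density set up in the Preliminaries. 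Their common mean is exactly the atypical probability,
\[
\Expectation_{\codebookPMeasure_\codebook} Z_\codewordIndex
=
\int \kernelPower{\channelKernel}{\codebookBlocklength}\!\left(\channelInAlphElement^\codebookBlocklength,\, \{ \channelOutAlphElement^\codebookBlocklength: (\channelInAlphElement^\codebookBlocklength,\channelOutAlphElement^\codebookBlocklength)\notin\typicalSet{\typicalityParam}\}\right) \channelPMeasure_{\channelIn^\codebookBlocklength}(d\channelInAlphElement^\codebookBlocklength)
=
\channelPMeasure_{\channelIn^\codebookBlocklength,\channelOut^\codebookBlocklength}(\channelInAlph^\codebookBlocklength\times\channelOutAlph^\codebookBlocklength\setminus\typicalSet{\typicalityParam})
\leq
\lemmaexpectation.
\]

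Next I would run the Chernoff argument on $S:=\sum_{\codewordIndex=1}^{\exp(\codebookBlocklength\codebookRate)} Z_\codewordIndex$. The elementary inequality $\exp(\chernoffParam z)\leq 1+z\,(\exp(\chernoffParam)-1)$, valid for $z\in[0,1]$ and $\chernoffParam\geq 0$, together with independence and $\Expectation_{\codebookPMeasure_\codebook} Z_\codewordIndex\leq\lemmaexpectation$, gives $\Expectation_{\codebookPMeasure_\codebook}\exp(\chernoffParam S)\leq \exp\big(\exp(\codebookBlocklength\codebookRate)\,\lemmaexpectation\,(\exp(\chernoffParam)-1)\big)$, where replacing the true mean of $Z_\codewordIndex$ by its upper bound $\lemmaexpectation$ is legitimate because $\exp(\chernoffParam)-1\geq 0$. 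Since $\codebookPMeasure_{2,\codebook}(\channelOutAlph^\codebookBlocklength) > \lemmaexpectation(1+\lemmaconst)$ is the event $S > (1+\lemmaconst)\,\lemmaexpectation\,\exp(\codebookBlocklength\codebookRate)$, the exponential Markov inequality then yields, for every $\chernoffParam\geq 0$,
\[
\codebookPMeasure_\codebook\big(\codebookPMeasure_{2,\codebook}(\channelOutAlph^\codebookBlocklength) > \lemmaexpectation(1+\lemmaconst)\big)
\leq
\exp\!\Big(\exp(\codebookBlocklength\codebookRate)\,\lemmaexpectation\,\big(\exp(\chernoffParam)-1-\chernoffParam(1+\lemmaconst)\big)\Big).
\]
Minimizing the exponent over $\chernoffParam$ at $\chernoffParam=\log(1+\lemmaconst)\geq 0$ turns it into $\exp(\codebookBlocklength\codebookRate)\,\lemmaexpectation\,\big(\lemmaconst-(1+\lemmaconst)\log(1+\lemmaconst)\big)$, and the standard estimate $\lemmaconst-(1+\lemmaconst)\log(1+\lemmaconst)\leq -\lemmaconst^2/3$ for $\lemmaconst\in[0,1]$ gives the claimed bound.

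I do not anticipate a genuine obstacle. The two points that require a little care are: (i) using the upper bound $\lemmaexpectation$ rather than the exact mean of $Z_\codewordIndex$ inside the moment-generating-function estimate --- this is exactly why the hypothesis $\lemmaconst\leq 1$ is needed only for the closing quadratic inequality and not for the Chernoff step itself; and (ii) the measurability bookkeeping that makes $Z_\codewordIndex$, and hence $\codebookPMeasure_{2,\codebook}(\channelOutAlph^\codebookBlocklength)$, a well-defined random variable on the codebook probability space, which follows from the conventions adopted in the Preliminaries. Neither is substantial, so the whole argument is short.
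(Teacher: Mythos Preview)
Your proposal is correct and follows essentially the same approach as the paper: both recognize $\exp(\codebookBlocklength\codebookRate)\,\codebookPMeasure_{2,\codebook}(\channelOutAlph^\codebookBlocklength)$ as a sum of independent $[0,1]$-valued random variables with mean bounded by $\lemmaexpectation\exp(\codebookBlocklength\codebookRate)$ and apply the multiplicative Chernoff--Hoeffding bound. The only difference is cosmetic --- the paper cites the bound directly, whereas you spell out the MGF estimate and the optimization over $\chernoffParam$, arriving at the same exponent via $\lemmaconst-(1+\lemmaconst)\log(1+\lemmaconst)\leq-\lemmaconst^2/3$ on $[0,1]$.
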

\begin{shownto}{conference}
\begin{proof}
The lemma follows from the Chernoff-Hoeffding bound. A complete proof can be found in~\cite{arxivVersion}.
\end{proof}
\end{shownto}
\begin{shownto}{arxiv}
\begin{proof}
Observe
$\Expectation_{\codebookPMeasure_\codebook}(\codebookPMeasure_{2,\codebook}(\channelOutAlph^\codebookBlocklength))
=
\channelPMeasure_{\channelIn^\codebookBlocklength, \channelOut^\codebookBlocklength}
(\channelInAlph^\codebookBlocklength \times \channelOutAlph^\codebookBlocklength \setminus \typicalSet{\typicalityParam})
\leq
\lemmaexpectation$
and bound
\begin{align*}
&\hphantom{{}={}}
\codebookPMeasure_\codebook\left(\codebookPMeasure_{2,\codebook}(\channelOutAlph^\codebookBlocklength)
> 
\lemmaexpectation(1+\lemmaconst)
\right)
\\
&=
\codebookPMeasure_\codebook\left(
  \exp(\codebookBlocklength\codebookRate)\codebookPMeasure_{2,\codebook}(\channelOutAlph^\codebookBlocklength)
  >
  \lemmaexpectation
  \exp(\codebookBlocklength\codebookRate)
  (1+\lemmaconst)
\right) \\
&\begin{aligned}
=
\codebookPMeasure_\codebook\Bigg(
  &\sum\limits_{\codewordIndex=1}^{\exp(\codebookBlocklength\codebookRate)}
    \kernelPower{\channelKernel}{\codebookBlocklength}\left(
      \codeword(\codewordIndex),
      \{ \channelOutAlphElement^\codebookBlocklength: (\codeword(\codewordIndex), \channelOutAlphElement^\codebookBlocklength) \notin \typicalSet{\typicalityParam}\}
    \right)
\\
    &>
    \lemmaexpectation
    \exp(\codebookBlocklength\codebookRate)
    (1+\lemmaconst)
\Bigg)
\end{aligned}
\\
&\leq
\exp\left(
  -\frac{1}{3} \lemmaconst^2 \lemmaexpectation \exp(\codebookBlocklength\codebookRate)
\right).
\end{align*}
The inequality follows from the Chernoff-Hoeffding bound~\cite[Ex. 1.1]{ConcentrationTextbook} by noting that we sum probabilities (i.e. values in $[0,1]$) on the left side, that these probabilities are independently distributed under $\codebookPMeasure_\codebook$ and that by the hypothesis of the lemma the expectation of the term on the left is bounded by $\lemmaexpectation\exp(\codebookBlocklength\codebookRate)$.
\end{proof}
\end{shownto}
\begin{lemma}[Bound for typical terms]
\label{lemma:typical}
Let $\lemmaconst,\chernoffParam > 0$ and define
\begin{align}
\label{proof:resdirect-subst-rateInf}
\rateInfSubst &:= \exp(\codebookBlocklength(\codebookRate-\mutualInformation{\channelIn}{\channelOut}-\typicalityParam)).
\end{align}
Suppose $\rateInfSubst/(6\chernoffParam) \geq 1$. Then
\begin{multline}
\label{lemma:typicalbound-probstatement}
\codebookPMeasure_\codebook\left(
  \Expectation_{\channelPMeasure_{\channelOut^\codebookBlocklength}}
    \positive{
      \RNDerivative{\codebookPMeasure_{1,\codebook}}{\channelPMeasure_{\channelOut^\codebookBlocklength}}(\channelOutAlphElement^\codebookBlocklength)
      -
      1
    } 
  >
  \lemmaconst
\right)
\\
\leq
\Bigg(
  1
  +
  \sqrt{\frac{3\pi}{2}}
  \exp\left(
    \frac{3\chernoffParam^2}{4\rateInfSubst}
  \right)
  \frac{\chernoffParam}{\sqrt{\rateInfSubst}}
  +
  \exp(-\chernoffParam)
\Bigg)
\exp(-\lemmaconst\chernoffParam).
\end{multline}
\end{lemma}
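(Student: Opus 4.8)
The plan is to apply Markov's inequality to the exponential moment of the scalar random variable $T(\codebook) := \Expectation_{\channelPMeasure_{\channelOut^\codebookBlocklength}}\positive{\RNDerivative{\codebookPMeasure_{1,\codebook}}{\channelPMeasure_{\channelOut^\codebookBlocklength}}(\channelOutAlphElement^\codebookBlocklength) - 1}$, but first to peel off the outer expectation over $\channelOutAlphElement^\codebookBlocklength$ so that the estimate reduces to one about sums of \emph{bounded} i.i.d.\ random variables. Concretely, Markov gives $\codebookPMeasure_\codebook(T > \lemmaconst) \le \exp(-\lemmaconst\chernoffParam)\,\Expectation_{\codebookPMeasure_\codebook}\exp(\chernoffParam T)$; convexity of $\generalReal \mapsto \exp(\chernoffParam\generalReal)$ and Jensen's inequality give $\exp(\chernoffParam T(\codebook)) \le \Expectation_{\channelPMeasure_{\channelOut^\codebookBlocklength}}\exp(\chernoffParam\positive{\RNDerivative{\codebookPMeasure_{1,\codebook}}{\channelPMeasure_{\channelOut^\codebookBlocklength}}(\channelOutAlphElement^\codebookBlocklength) - 1})$; and, since the integrand is nonnegative, Tonelli lets us exchange $\Expectation_{\codebookPMeasure_\codebook}$ with $\Expectation_{\channelPMeasure_{\channelOut^\codebookBlocklength}}$. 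It therefore suffices to show that, for $\channelPMeasure_{\channelOut^\codebookBlocklength}$-almost every $\channelOutAlphElement^\codebookBlocklength$,
\[
\Expectation_{\codebookPMeasure_\codebook}\exp(\chernoffParam\positive{\RNDerivative{\codebookPMeasure_{1,\codebook}}{\channelPMeasure_{\channelOut^\codebookBlocklength}}(\channelOutAlphElement^\codebookBlocklength) - 1}) \le 1 + \sqrt{\frac{3\pi}{2}}\,\exp\left(\frac{3\chernoffParam^2}{4\rateInfSubst}\right)\frac{\chernoffParam}{\sqrt{\rateInfSubst}} + \exp(-\chernoffParam);
\]
integrating this uniform bound against the probability measure $\channelPMeasure_{\channelOut^\codebookBlocklength}$ and multiplying by $\exp(-\lemmaconst\chernoffParam)$ then yields~\eqref{lemma:typicalbound-probstatement}. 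This Jensen--Tonelli reduction is what, in the continuous setting, plays the role of the union bound over output letters used in the finite-alphabet arguments of~\cite{CuffSoftCovering,FreyMACResolvability}.

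Second, I would prove the displayed per-$\channelOutAlphElement^\codebookBlocklength$ bound. Write $S := \RNDerivative{\codebookPMeasure_{1,\codebook}}{\channelPMeasure_{\channelOut^\codebookBlocklength}}(\channelOutAlphElement^\codebookBlocklength) = \exp(-\codebookBlocklength\codebookRate)\sum_{\codewordIndex=1}^{\exp(\codebookBlocklength\codebookRate)}\exp(\informationDensity{\codeword(\codewordIndex)}{\channelOutAlphElement^\codebookBlocklength})\,\indicator{(\codeword(\codewordIndex),\channelOutAlphElement^\codebookBlocklength)\in\typicalSet{\typicalityParam}}$, which under $\codebookPMeasure_\codebook$ is a sum of $\exp(\codebookBlocklength\codebookRate)$ i.i.d.\ terms. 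On $\typicalSet{\typicalityParam}$ we have $\informationDensity{\codeword(\codewordIndex)}{\channelOutAlphElement^\codebookBlocklength} \le \codebookBlocklength(\mutualInformation{\channelIn}{\channelOut}+\typicalityParam)$, so each summand lies in $[0, 1/\rateInfSubst]$ with $\rateInfSubst$ as in~\eqref{proof:resdirect-subst-rateInf}, and the common mean is at most $\exp(-\codebookBlocklength\codebookRate)$ since $\Expectation_{\channelPMeasure_{\channelIn^\codebookBlocklength}}\exp(\informationDensity{\channelIn^\codebookBlocklength}{\channelOutAlphElement^\codebookBlocklength}) = 1$ for a.e.\ $\channelOutAlphElement^\codebookBlocklength$; hence $\Expectation_{\codebookPMeasure_\codebook}S \le 1$. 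I would then use the layer-cake identity $\Expectation\exp(\chernoffParam\positive{S-1}) = 1 + \int_0^\infty \chernoffParam\exp(\chernoffParam\generalReal)\,\Probability(S > 1+\generalReal)\,d\generalReal$ together with the Chernoff bound for sums of i.i.d.\ variables bounded by $1/\rateInfSubst$ with total mean $\le 1$: optimizing $\Expectation\exp(\theta S) \le \exp(\rateInfSubst(\exp(\theta/\rateInfSubst)-1))$ over $\theta \ge 0$ gives the Poisson-type tail $\Probability(S > 1+\generalReal) \le \exp(-\rateInfSubst((1+\generalReal)\ln(1+\generalReal)-\generalReal))$. Splitting the $\generalReal$-integral at a suitable constant, on the central part one lower-bounds $(1+\generalReal)\ln(1+\generalReal)-\generalReal$ by a constant multiple of $\generalReal^2$ and completes the square in the exponent to obtain a Gaussian integral, which produces the $\sqrt{3\pi/2}\,\exp(3\chernoffParam^2/(4\rateInfSubst))\,\chernoffParam/\sqrt{\rateInfSubst}$ term; on the tail part the bound decays at least like $\exp(-c\rateInfSubst\generalReal)$ with $c\rateInfSubst$ safely exceeding $\chernoffParam$ --- this is exactly where the hypothesis $\rateInfSubst/(6\chernoffParam) \ge 1$ is used --- so that contribution is absorbed into $\exp(-\chernoffParam)$. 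This analytic estimate is essentially the content of Lemma~\ref{lemma:typical-helper}.

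The step I expect to be the main obstacle is this last estimate. The tail of $S$ is of Bennett/Poisson type rather than globally sub-Gaussian, so one cannot control $\Probability(S>1+\generalReal)$ by a single Gaussian over all of $[0,\infty)$; the range of $\generalReal$ has to be split, and the central Gaussian piece balanced against the exponentially growing factor $\exp(\chernoffParam\generalReal)$ present in the tail, with the hypothesis $\rateInfSubst \ge 6\chernoffParam$ being precisely what keeps both pieces within the claimed closed-form bound. The remaining ingredients --- the Jensen--Tonelli reduction, the identification of $S$ as a bounded i.i.d.\ sum of mean at most one, and the concluding integration over $\channelOutAlphElement^\codebookBlocklength$ followed by Markov's inequality --- are routine.
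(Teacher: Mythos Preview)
Your proposal is correct and follows essentially the same approach as the paper: the Chernoff--Jensen--Tonelli reduction (which is precisely the content of Lemma~\ref{lemma:typical-helper}, not the tail-integral estimate as you write at the end of your second paragraph), followed by a Chernoff--Hoeffding/Bennett tail bound on the bounded i.i.d.\ sum and a layer-cake computation split at $\generalReal=1$. The paper works with the form $\exp\bigl(-\rateInfSubst\,\proofconstantlemma^2/(2(1+\proofconstantlemma/3))\bigr)$ from~\cite[Theorem~2.3b]{McDiarmidConcentration} rather than the Bennett function $(1+\generalReal)\log(1+\generalReal)-\generalReal$, and writes the layer cake as $\int_0^\infty \codebookPMeasure_\codebook(\exp(\chernoffParam[\cdot]^+)>\distSubst)\,d\distSubst$ instead of your equivalent $1+\int_0^\infty \chernoffParam e^{\chernoffParam\generalReal}\codebookPMeasure_\codebook(S>1+\generalReal)\,d\generalReal$, but the ensuing Gaussian-integral and exponential-tail pieces, and the use of $\rateInfSubst\geq 6\chernoffParam$, are exactly as you outline.
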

Before we prove this lemma, we make an observation that we need in the proof.
\begin{lemma}
\label{lemma:typical-helper}
Let $\generalFunction$ be a measurable function mapping codebooks and elements of $\channelOutAlph^\codebookBlocklength$ to the nonnegative reals and let $\chernoffParam, \lemmaconst > 0$. Then
\begin{multline}
\label{lemma:typical-helper-probstatement}
\hat{\codebookPMeasure}
:=
\codebookPMeasure_\codebook\left(
  \Expectation_{\channelPMeasure^\codebookBlocklength}
    \generalFunction(\codebook, \channelOutAlphElement^\codebookBlocklength)
  >
  \lemmaconst
\right)
\\
\leq
\Expectation_{\channelPMeasure^\codebookBlocklength}
  \Expectation_{\codebookPMeasure_\codebook} \Big(
    \exp(\chernoffParam\generalFunction(\codebook, \channelOutAlphElement^\codebookBlocklength))
  \Big)
\exp(-\lemmaconst\chernoffParam).
\end{multline}
\end{lemma}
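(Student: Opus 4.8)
The plan is to combine the Chernoff--Markov inequality with Jensen's inequality and Tonelli's theorem. First I would fix the parameters $\chernoffParam,\lemmaconst > 0$ and rewrite the event defining $\hat{\codebookPMeasure}$: since $\generalReal \mapsto \exp(\chernoffParam\generalReal)$ is strictly increasing, the event $\{\Expectation_{\channelPMeasure^\codebookBlocklength} \generalFunction(\codebook, \channelOutAlphElement^\codebookBlocklength) > \lemmaconst\}$ is identical to $\{\exp(\chernoffParam\Expectation_{\channelPMeasure^\codebookBlocklength}\generalFunction(\codebook, \channelOutAlphElement^\codebookBlocklength)) > \exp(\chernoffParam\lemmaconst)\}$. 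Markov's inequality applied under $\codebookPMeasure_\codebook$ to the nonnegative random variable $\codebook \mapsto \exp(\chernoffParam\Expectation_{\channelPMeasure^\codebookBlocklength}\generalFunction(\codebook, \channelOutAlphElement^\codebookBlocklength))$ then gives
\[
\hat{\codebookPMeasure}
\leq
\exp(-\lemmaconst\chernoffParam)
\,
\Expectation_{\codebookPMeasure_\codebook}
  \exp\Big(\chernoffParam \Expectation_{\channelPMeasure^\codebookBlocklength} \generalFunction(\codebook, \channelOutAlphElement^\codebookBlocklength)\Big).
\]

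Next I would move the exponential inside the inner integral. Because $\channelPMeasure^\codebookBlocklength$ is a probability measure and $\exp$ is convex, Jensen's inequality yields, for every fixed codebook $\codebook$,
\[
\exp\Big(\Expectation_{\channelPMeasure^\codebookBlocklength} \chernoffParam\generalFunction(\codebook, \channelOutAlphElement^\codebookBlocklength)\Big)
\leq
\Expectation_{\channelPMeasure^\codebookBlocklength} \exp\big(\chernoffParam\generalFunction(\codebook, \channelOutAlphElement^\codebookBlocklength)\big).
\]
Substituting this into the previous bound and interchanging the two expectations gives
\[
\hat{\codebookPMeasure}
\leq
\exp(-\lemmaconst\chernoffParam)
\,
\Expectation_{\codebookPMeasure_\codebook} \Expectation_{\channelPMeasure^\codebookBlocklength} \exp\big(\chernoffParam\generalFunction(\codebook, \channelOutAlphElement^\codebookBlocklength)\big)
=
\exp(-\lemmaconst\chernoffParam)
\,
\Expectation_{\channelPMeasure^\codebookBlocklength} \Expectation_{\codebookPMeasure_\codebook} \exp\big(\chernoffParam\generalFunction(\codebook, \channelOutAlphElement^\codebookBlocklength)\big),
\]
which is exactly~\eqref{lemma:typical-helper-probstatement}.

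The interchange of the order of integration is the only step that needs a word of justification, and it is mild: the integrand $(\codebook, \channelOutAlphElement^\codebookBlocklength) \mapsto \exp(\chernoffParam\generalFunction(\codebook, \channelOutAlphElement^\codebookBlocklength))$ is nonnegative and jointly measurable (the map $\generalFunction$ is measurable by hypothesis, $\codebookPMeasure_\codebook$ is the product measure obtained by drawing all codeword components i.i.d.\ from the Polish space $\channelInAlph$, and $\channelPMeasure^\codebookBlocklength$ is a probability measure on the Polish space $\channelOutAlph^\codebookBlocklength$), so Tonelli's theorem applies without any integrability assumption, and the inequality remains valid — if vacuously — when the right-hand side is infinite. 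I therefore expect no genuine obstacle in carrying this out; it is a one-paragraph computation, and the interest of the lemma lies entirely in how it is applied in the proof of Lemma~\ref{lemma:typical}, where $\generalFunction$ is instantiated as the truncated density appearing in~\eqref{split-typical-atypical} and the resulting moment generating function is estimated.
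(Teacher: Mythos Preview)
Your proposal is correct and matches the paper's own proof essentially step for step: the paper also first applies the Chernoff bound to obtain $\hat{\codebookPMeasure}\leq \Expectation_{\codebookPMeasure_\codebook}\exp(\chernoffParam\,\Expectation_{\channelPMeasure^\codebookBlocklength}\generalFunction(\codebook,\channelOutAlphElement^\codebookBlocklength))\exp(-\lemmaconst\chernoffParam)$, then invokes Jensen's inequality followed by Fubini's theorem. Your additional remark that Tonelli suffices because the integrand is nonnegative is a welcome clarification but not a departure from the paper's argument.
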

\begin{shownto}{arxiv}
\begin{proof}
An application of the Chernoff bound yields
\[
\hat{\codebookPMeasure}
\leq
\Expectation_{\codebookPMeasure_\codebook} \left(
\exp\left(
  \chernoffParam
  \Expectation_{\channelPMeasure^\codebookBlocklength}
    \generalFunction(\codebook, \channelOutAlphElement^\codebookBlocklength)
\right)
\right)
\exp(-\lemmaconst\chernoffParam).
\]
We can then prove (\ref{lemma:typical-helper-probstatement}) by successive applications of Jensen's inequality and Fubini's theorem.
\end{proof}
\end{shownto}
\begin{shownto}{conference}
\begin{proof}
An application of the Chernoff bound yields
$
\hat{\codebookPMeasure}
\leq
\Expectation_{\codebookPMeasure_\codebook} \left(
\exp\left(
  \chernoffParam
  \Expectation_{\channelPMeasure^\codebookBlocklength}
    \generalFunction(\codebook, \channelOutAlphElement^\codebookBlocklength)
\right)
\right)
\exp(-\lemmaconst\chernoffParam)
$.
We then prove (\ref{lemma:typical-helper-probstatement}) by applying Jensen's inequality and Fubini's theorem.
\end{proof}
\end{shownto}
\begin{proof}[Proof of Lemma~\ref{lemma:typical}]
We begin by examining parts of the term in (\ref{lemma:typicalbound-probstatement}) for fixed, but arbitrary $\codebook$ and $\channelOutAlphElement^\codebookBlocklength$ and rewrite
\begin{multline*}
\rateInfSubst \RNDerivative{\codebookPMeasure_{1,\codebook}}{\channelPMeasure_{\channelOut^\codebookBlocklength}}(\channelOutAlphElement^\codebookBlocklength)
=
\sum\limits_{\codewordIndex=1}^{\exp(\codebookBlocklength\codebookRate)}
  \exp\left(\codebookBlocklength(-\mutualInformation{\channelIn}{\channelOut} - \typicalityParam)\right) 
\\
\cdot
\RNDerivative{\kernelPower{\channelKernel}{\codebookBlocklength}(\codeword(\codewordIndex), \cdot)}{\channelPMeasure_{\channelOut^\codebookBlocklength}}(\channelOutAlphElement^\codebookBlocklength)
\indicator{(\codeword(\codewordIndex), \channelOutAlphElement^\codebookBlocklength) \in \typicalSet{\typicalityParam}}.
\end{multline*}
\begin{shownto}{arxiv}
Now, we observe that the indicator function bounds the relative density to be at most $\exp(\codebookBlocklength(\mutualInformation{\channelIn}{\channelOut}+\typicalityParam))$ and thus every term in the sum to range within $[0,1]$ and that furthermore
\begin{align*}
\Expectation_{\codebookPMeasure_\codebook}\left( \rateInfSubst \RNDerivative{\codebookPMeasure_{1,\codebook}}{\channelPMeasure_{\channelOut^\codebookBlocklength}}(\channelOutAlphElement^\codebookBlocklength)\right)
&\leq
\exp\left(\codebookBlocklength(-\mutualInformation{\channelIn}{\channelOut} - \typicalityParam)\right)
\\
&\cdot
\sum\limits_{\codewordIndex=1}^{\exp(\codebookBlocklength\codebookRate)}
  \Expectation_{\codebookPMeasure_\codebook}\left( \RNDerivative{\kernelPower{\channelKernel}{\codebookBlocklength}(\codeword(\codewordIndex), \cdot)}{\channelPMeasure_{\channelOut^\codebookBlocklength}}(\channelOutAlphElement^\codebookBlocklength)\right) 
\\
&=
\rateInfSubst.
\end{align*}
\end{shownto}
\begin{shownto}{conference}
Now, we observe that the indicator function bounds every term in the sum to range within $[0,1]$ and that furthermore
\begin{multline*}
\Expectation_{\codebookPMeasure_\codebook}\left( \rateInfSubst \RNDerivative{\codebookPMeasure_{1,\codebook}}{\channelPMeasure_{\channelOut^\codebookBlocklength}}(\channelOutAlphElement^\codebookBlocklength)\right)
\leq
\exp\left(\codebookBlocklength(-\mutualInformation{\channelIn}{\channelOut} - \typicalityParam)\right)
\\
\cdot
\sum\limits_{\codewordIndex=1}^{\exp(\codebookBlocklength\codebookRate)}
  \Expectation_{\codebookPMeasure_\codebook}\left( \RNDerivative{\kernelPower{\channelKernel}{\codebookBlocklength}(\codeword(\codewordIndex), \cdot)}{\channelPMeasure_{\channelOut^\codebookBlocklength}}(\channelOutAlphElement^\codebookBlocklength)\right) 
=
\rateInfSubst.
\end{multline*}
\end{shownto}
We then use these observations to yield, for any $\proofconstantlemma > 0$,
\begin{shownto}{arxiv}
\begin{align}
&\phantom{{}={}}
\codebookPMeasure_\codebook \left(
  \exp\left(
    \chernoffParam
    \positive{
      \RNDerivative{\codebookPMeasure_{1,\codebook}}{\channelPMeasure_{\channelOut^\codebookBlocklength}}(\channelOutAlphElement^\codebookBlocklength)
      -
      1
    }
  \right)
  >
  \exp(\chernoffParam\proofconstantlemma)
\right)
\nonumber
\\
\label{proof:resdirect-equalevents}
&=
\codebookPMeasure_\codebook \left(
  \RNDerivative{\codebookPMeasure_{1,\codebook}}{\channelPMeasure_{\channelOut^\codebookBlocklength}}(\channelOutAlphElement^\codebookBlocklength)
  >
  1 + \proofconstantlemma
\right)
\\
&\begin{aligned}
=
\codebookPMeasure_\codebook \bigg(
  \rateInfSubst
  \RNDerivative{\codebookPMeasure_{1,\codebook}}{\channelPMeasure_{\channelOut^\codebookBlocklength}}(\channelOutAlphElement^\codebookBlocklength)
  >
  \left(
    1 + \proofconstantlemma
  \right)
  \rateInfSubst
\bigg)
\end{aligned}
\nonumber
\\
&\leq \label{proof:resdirect-typical-chhoeff}
\exp \left(
  -\frac{\proofconstantlemma^2}{2\left(1+\frac{\proofconstantlemma}{3}\right)} \rateInfSubst
\right),
\end{align}
\end{shownto}
\begin{shownto}{conference}
\begin{align}
&\phantom{{}={}}
\codebookPMeasure_\codebook \left(
  \exp\left(
    \chernoffParam
    \positive{
      \RNDerivative{\codebookPMeasure_{1,\codebook}}{\channelPMeasure_{\channelOut^\codebookBlocklength}}(\channelOutAlphElement^\codebookBlocklength)
      -
      1
    }
  \right)
  >
  \exp(\chernoffParam\proofconstantlemma)
\right)
\nonumber
\\
\label{proof:resdirect-equalevents}
&=
\codebookPMeasure_\codebook \left(
  \RNDerivative{\codebookPMeasure_{1,\codebook}}{\channelPMeasure_{\channelOut^\codebookBlocklength}}(\channelOutAlphElement^\codebookBlocklength)
  >
  1 + \proofconstantlemma
\right)
\\
&\begin{aligned}
=
\codebookPMeasure_\codebook \bigg(
  \rateInfSubst
  \RNDerivative{\codebookPMeasure_{1,\codebook}}{\channelPMeasure_{\channelOut^\codebookBlocklength}}(\channelOutAlphElement^\codebookBlocklength)
  >
  \left(
    1 + \proofconstantlemma
  \right)
  \rateInfSubst
\bigg)
\end{aligned}
\leq
\label{proof:resdirect-typical-chhoeff}
\exp \left(
  -\frac{\proofconstantlemma^2}{2\left(1+\frac{\proofconstantlemma}{3}\right)} \rateInfSubst
\right),
\end{align}
\end{shownto}

where (\ref{proof:resdirect-equalevents}) holds because the two measured events are equal and (\ref{proof:resdirect-typical-chhoeff}) follows by the Chernoff-Hoeffding bound~\cite[Theorem 2.3b]{McDiarmidConcentration}. (\ref{proof:resdirect-typical-chhoeff}) can be upper bounded by
\begin{shownto}{arxiv}
\begin{align}
\label{proof:resdirect-typical-lessthanone}
\exp \left(
  -\frac{\proofconstantlemma^2}{3} \rateInfSubst
\right)
\end{align}
for $\proofconstantlemma \leq 1$ (in particular) and by
\begin{align}
\label{proof:resdirect-typical-greaterone}
\exp \left(
  -\frac{\proofconstantlemma}{3} \rateInfSubst
\right)
\end{align}
for $\proofconstantlemma \geq 1$ (in particular).
\end{shownto}
\begin{shownto}{conference}
\begin{align}
\label{proof:resdirect-typical}
\exp \left(
  -\frac{\proofconstantlemma^2}{3} \rateInfSubst
\right)
\text{ for }
\proofconstantlemma \leq 1;~
\exp \left(
  -\frac{\proofconstantlemma}{3} \rateInfSubst
\right)
\text{ for }
\proofconstantlemma \geq 1.
\end{align}
\end{shownto}
We will in the following use the substitutions
\showto{arxiv}{
\begin{align}
\label{proof:resdirect-subst-dist}
\distSubst &:= \exp(\chernoffParam\proofconstantlemma)
\\
\label{proof:resdirect-subst-int}
\intSubst &:= \frac{\log(\distSubst)}{\chernoffParam} \sqrt{\frac{2\rateInfSubst}{3}} - \sqrt{\frac{3}{2\rateInfSubst}} \chernoffParam.
\end{align}
Since we will be using (\ref{proof:resdirect-subst-int}) for integration by substitution, we note that it implies
\begin{align}
\derivwrt{\distSubst}{\intSubst}
=
\exp\left(
  \intSubst \chernoffParam \sqrt{\frac{3}{2\rateInfSubst}}+\chernoffParam^2\frac{3}{2\rateInfSubst}
\right)
\chernoffParam \sqrt{\frac{3}{2\rateInfSubst}}.
\end{align}
}
\showto{conference}{
\begin{align}
\label{proof:resdirect-subst}
\distSubst := \exp(\chernoffParam\proofconstantlemma)
,~~
\intSubst := \frac{\log(\distSubst)}{\chernoffParam} \sqrt{\frac{2\rateInfSubst}{3}} - \sqrt{\frac{3}{2\rateInfSubst}} \chernoffParam.
\end{align}
}
We have, e.g. by~\cite[Eq. 21.9]{BillingsleyProbability},
\begin{align}
&\phantom{{}={}}
\Expectation_{\codebookPMeasure_\codebook} \left(
  \exp\left(
    \chernoffParam
    \positive{
      \RNDerivative{\codebookPMeasure_{1,\codebook}}{\channelPMeasure_{\channelOut^\codebookBlocklength}}(\channelOutAlphElement^\codebookBlocklength)
      -
      1
    }
  \right)
\right)
\nonumber
\\
\label{proof:resdirect-exprepres}
&=
\int\nolimits_0^\infty
  \codebookPMeasure_\codebook \left(
    \exp\left(
      \chernoffParam
      \positive{
        \RNDerivative{\codebookPMeasure_{1,\codebook}}{\channelPMeasure_{\channelOut^\codebookBlocklength}}(\channelOutAlphElement^\codebookBlocklength)
        -
        1
      }
    \right)
    >
    \distSubst
  \right)
d\distSubst
\end{align}
and upper bound this integral by splitting the integration domain into three parts: The integration over $[0,1]$ can be upper bounded by $1$ (since the integrand is a probability). The integration over $[1,\exp(\chernoffParam)]$ can be upper bounded by
\begin{shownto}{arxiv}
\begin{align}
\label{proof:resdirect-substitution}
&\hphantom{{}={}}
\begin{aligned}
\int\nolimits_1^{\infty}
  \exp \left(
  -\frac{(\log \distSubst)^2}{3\chernoffParam^2} \rateInfSubst
\right)
d\distSubst
\end{aligned}
\\
\label{proof:resdirect-intsubst}
&\begin{aligned}
=
\int\nolimits_{0}^\infty
  \exp \Bigg(
  &-
  \frac{\intSubst^2 \chernoffParam^2 \frac{3}{2\rateInfSubst}
         +
         2 \intSubst \chernoffParam^3
         \left(
           \frac{3}{2\rateInfSubst}
         \right)^{\frac{3}{2}}
         +
         \chernoffParam^4 \left(\frac{3}{2\rateInfSubst}\right)^2
        }
        {3\chernoffParam^2}
  \rateInfSubst
  \\
  &+
  \intSubst \chernoffParam \sqrt{\frac{3}{2\rateInfSubst}}
  +
  \chernoffParam^2\frac{3}{2\rateInfSubst}
\Bigg)
\chernoffParam \sqrt{\frac{3}{2\rateInfSubst}}
d\intSubst
\end{aligned}
\\
\label{proof:resdirect-intfinal}
&=
\int\nolimits_{0}^\infty
  \exp \left(
    - \frac{\intSubst^2}{2}
  \right)
d\intSubst
\cdot
\exp\left(
  \frac{3\chernoffParam^2}{4\rateInfSubst}
\right)
\chernoffParam \sqrt{\frac{3}{2\rateInfSubst}}.
\end{align}
(\ref{proof:resdirect-substitution}) follows by substituting (\ref{proof:resdirect-typical-lessthanone}) as well as (\ref{proof:resdirect-subst-dist}) and enlarging the integration domain to $[1,\infty)$, which can be done because the integrand is nonnegative. (\ref{proof:resdirect-intsubst}) follows by the rule for integration by substitution using (\ref{proof:resdirect-subst-int}).
\end{shownto}
\begin{shownto}{conference}
\begin{align}
\label{proof:resdirect-substitution}
&\hphantom{{}={}}
\begin{aligned}
\int\nolimits_1^{\infty}
  \exp \left(
  -\frac{(\log \distSubst)^2}{3\chernoffParam^2} \rateInfSubst
\right)
d\distSubst
\end{aligned}
\\
\label{proof:resdirect-intfinal}
&=
\int\nolimits_{0}^\infty
  \exp \left(
    - \frac{\intSubst^2}{2}
  \right)
d\intSubst
\cdot
\exp\left(
  \frac{3\chernoffParam^2}{4\rateInfSubst}
\right)
\chernoffParam \sqrt{\frac{3}{2\rateInfSubst}}.
\end{align}
(\ref{proof:resdirect-substitution}) follows by substituting (\ref{proof:resdirect-typical}) as well as (\ref{proof:resdirect-subst}) and enlarging the integration domain to $[1,\infty)$. (\ref{proof:resdirect-intfinal}) follows by the rule for integration by substitution using (\ref{proof:resdirect-subst}).
\end{shownto}

The integration over $[\exp(\chernoffParam), \infty)$ can be upper bounded by
\begin{align}
\label{proof:resdirect-substitution-large}
&\hphantom{{}={}}
\int\nolimits_{\exp(\chernoffParam)}^{\infty}
  \exp \left(
  -\frac{\log \distSubst}{3\chernoffParam} \rateInfSubst
\right)
d\distSubst
=
\int\nolimits_{\exp(\chernoffParam)}^{\infty}
\distSubst^{
  -\rateInfSubst/(3\chernoffParam)
}
d\distSubst
\\
\label{proof:resdirect-intfinal-large}
&=
\frac{\exp(\chernoffParam (1-\rateInfSubst/(3\chernoffParam)))}
     {\rateInfSubst/(3\chernoffParam)-1}
\leq
6\frac{\chernoffParam}{\rateInfSubst} \exp\left(-\frac{\rateInfSubst}{6}\right)
\leq
\exp(-\chernoffParam),
\end{align}
where (\ref{proof:resdirect-substitution-large}) is by (\showto{arxiv}{\ref{proof:resdirect-typical-greaterone}}\showto{conference}{\ref{proof:resdirect-typical}}) and the inequalities are true because $\rateInfSubst/(6\chernoffParam) \geq 1$.
We now apply Lemma~\ref{lemma:typical-helper} with
$
\generalFunction(\codebook, \channelOutAlphElement^\codebookBlocklength)
:=
\positive{
      \RNDerivative{\codebookPMeasure_{1,\codebook}}{\channelPMeasure_{\channelOut^\codebookBlocklength}}(\channelOutAlphElement^\codebookBlocklength)
      -
      1
}
$.
In the resulting bound, we substitute the bound of $1$ for integration domain $[0,1]$ as well as (\ref{proof:resdirect-intfinal}) and (\ref{proof:resdirect-intfinal-large}), substitute back (\ref{proof:resdirect-subst-rateInf}) and note that $\exp(-\intSubst^2/2)$ is the well-known unnormalized standard normal density, and get~(\ref{lemma:typicalbound-probstatement}).
\end{proof}

\begin{proof}[Proof of Theorem~\ref{theorem:firstorder}]
\begin{shownto}{arxiv}
In order to bound the atypical term in the sum~(\ref{split-typical-atypical}), note first that for any $\renyiParam>1$,
\begin{align}
&\hphantom{{}={}}
\channelPMeasure_{\channelIn^\codebookBlocklength, \channelOut^\codebookBlocklength}(
  \channelInAlph^\codebookBlocklength \times \channelOutAlph^\codebookBlocklength \setminus \typicalSet{\typicalityParam}
) \nonumber
\\
&=
\channelPMeasure_{\channelIn^\codebookBlocklength, \channelOut^\codebookBlocklength}\left(
  \left\{
    (\channelInAlphElement^\codebookBlocklength,\channelOutAlphElement^\codebookBlocklength)
    :
    \informationDensity{\channelInAlphElement^\codebookBlocklength}{\channelOutAlphElement^\codebookBlocklength}/\codebookBlocklength
    >
    \mutualInformation{\channelIn}{\channelOut} + \typicalityParam
  \right\}
\right) \nonumber
\\
&\begin{aligned}
=
\channelPMeasure_{\channelIn^\codebookBlocklength, \channelOut^\codebookBlocklength}\big(
  &\{
    (\channelInAlphElement^\codebookBlocklength,\channelOutAlphElement^\codebookBlocklength)
    :
    \exp\left((\renyiParam-1) \informationDensity{\channelInAlphElement^\codebookBlocklength}{\channelOutAlphElement^\codebookBlocklength}\right)
    \\
    &>
    \exp\left(
      (\renyiParam-1)\codebookBlocklength\left(\mutualInformation{\channelIn}{\channelOut} + \typicalityParam \right)
    \right)
  \}
\big)
\end{aligned}
\nonumber \\
&\begin{aligned}
\leq
\label{atypical-bound-markov}
&\int\nolimits_{\channelInAlph^\codebookBlocklength \times \channelOutAlph^\codebookBlocklength}
  \exp\left(
    (\renyiParam-1)\informationDensity{\channelInAlphElement^\codebookBlocklength}{\channelOutAlphElement^\codebookBlocklength}
  \right)
  \channelPMeasure_{\channelIn, \channelOut}(d(\channelInAlphElement^\codebookBlocklength, \channelOutAlphElement^\codebookBlocklength))
  \\
  &\cdot
  \exp\left(-(\renyiParam-1)\codebookBlocklength\left(\mutualInformation{\channelIn}{\channelOut} + \typicalityParam \right)
\right)
\end{aligned}
\\
&\begin{aligned}
&\begin{aligned}
=
\exp \log \Bigg(
  \int\nolimits_{\channelInAlph^\codebookBlocklength \times \channelOutAlph^\codebookBlocklength}
    &\left(
      \RNDerivative{\kernelPower{\channelKernel}{\codebookBlocklength}(\codeword(\codewordIndex), \cdot)}{\channelPMeasure_{\channelOut^\codebookBlocklength}}
      (\channelOutAlphElement^\codebookBlocklength)
    \right)^{\renyiParam-1}
    \\
    &\cdot \channelPMeasure_{\channelIn, \channelOut}(d(\channelInAlphElement^\codebookBlocklength, \channelOutAlphElement^\codebookBlocklength))
\Bigg)
\end{aligned}
\\
&~~~\cdot
\exp\left(
  -\codebookBlocklength(\renyiParam-1)\left(\mutualInformation{\channelIn}{\channelOut} + \typicalityParam \right)
\right) \nonumber
\end{aligned}
\\
&=
\label{atypical-bound-mutualInformation}
\exp \left(
  -\codebookBlocklength (\renyiParam-1)
  \left(
    \mutualInformation{\channelIn}{\channelOut} + \typicalityParam - \renyidiv{\renyiParam}{\channelPMeasure_{\channelIn, \channelOut}}{\channelPMeasure_\channelIn \channelPMeasure_\channelOut}
  \right)
\right)
\\
&\leq
\label{atypical-bound-final-result}
\exp(-\codebookBlocklength\proofconstantOne),
\end{align}
where~(\ref{atypical-bound-markov}) follows by applying Markov's inequality and~(\ref{atypical-bound-final-result}) as long as
\begin{align}
\label{proof:resdirect-renyidiv-bound}
\proofconstantOne \leq (\renyiParam-1)\left(\mutualInformation{\channelIn}{\channelOut} + \typicalityParam - \renyidiv{\renyiParam}{\channelPMeasure_{\channelIn, \channelOut}}{\channelPMeasure_\channelIn \channelPMeasure_\channelOut} \right).
\end{align}
\end{shownto}
\begin{shownto}{conference}
In order to bound the atypical term in the sum~(\ref{split-typical-atypical}), note first that an application of the Chernoff bound and some elementary manipulations yield, for any $\renyiParam>1$,
\begin{align}
&\hphantom{{}={}}
\channelPMeasure_{\channelIn^\codebookBlocklength, \channelOut^\codebookBlocklength}(
  \channelInAlph^\codebookBlocklength \times \channelOutAlph^\codebookBlocklength \setminus \typicalSet{\typicalityParam}
) \nonumber
\\
&\leq
\exp \left(
  -\codebookBlocklength (\renyiParam-1)
  \left(
    \mutualInformation{\channelIn}{\channelOut} + \typicalityParam - \renyidiv{\renyiParam}{\channelPMeasure_{\channelIn, \channelOut}}{\channelPMeasure_\channelIn \channelPMeasure_\channelOut}
  \right)
\right)
\nonumber
\\
&\leq
\label{atypical-bound-final-result}
\exp(-\codebookBlocklength\proofconstantOne),
\end{align}
provided
$
\proofconstantOne \leq (\renyiParam-1)\left(\mutualInformation{\channelIn}{\channelOut} + \typicalityParam - \renyidiv{\renyiParam}{\channelPMeasure_{\channelIn, \channelOut}}{\channelPMeasure_\channelIn \channelPMeasure_\channelOut} \right)
$.\\
\end{shownto}

Note that since the moment-generating function $\Expectation_{\channelPMeasure_{\channelIn, \channelOut}} \exp(\generalReal \cdot \informationDensity{\channelIn}{\channelOut})$ exists and is finite for some $\generalReal > 0$, there is some $\renyiParam' > 1$ such that $\renyidiv{\renyiParam'}{\channelPMeasure_{\channelIn, \channelOut}}{\channelPMeasure_\channelIn \channelPMeasure_\channelOut}$ is finite, and thus $\renyidiv{\renyiParam}{\channelPMeasure_{\channelIn, \channelOut}}{\channelPMeasure_\channelIn \channelPMeasure_\channelOut}$ is finite and continuous in $\renyiParam$ for $\renyiParam \leq \renyiParam'$~\cite{RenyiDiv}. Since $\renyidiv{\renyiParam}{\channelPMeasure_{\channelIn, \channelOut}}{\channelPMeasure_\channelIn \channelPMeasure_\channelOut} \rightarrow \mutualInformation{\channelIn}{\channelOut}$ for $\renyiParam \rightarrow 1$, we can choose $\renyiParam > 1$, but sufficiently close to $1$ such that the bound on $\proofconstantOne$ is positive.

We can now apply Lemma~\ref{lemma:atypical} with $\lemmaexpectation := \exp(-\codebookBlocklength\proofconstantOne)$ and $\lemmaconst := 1$ and get
\begin{multline}
\codebookPMeasure_\codebook\big(
  \codebookPMeasure_{2,\codebook}(\channelOutAlph^\codebookBlocklength)
  >
  2\exp(-\codebookBlocklength\proofconstantOne)
\big)
\\
\leq
\exp\left(
  -\frac{1}{3}
  \exp(\codebookBlocklength(\codebookRate-\proofconstantOne))
\right).
\label{proof:firstorder-atypical-probability-statement}
\end{multline}

To bound the typical term in~(\ref{split-typical-atypical}), we apply Lemma~\ref{lemma:typical} with $\chernoffParam := \exp(\codebookBlocklength\proofconstantTwo)$ and $\lemmaconst := \exp(-\codebookBlocklength\proofconstantOne)$, which yields
\begin{multline}
\codebookPMeasure_\codebook \left(
  \Expectation_{\channelPMeasure_{\channelOut^\codebookBlocklength}}
    \positive{
      \RNDerivative{\codebookPMeasure_{1,\codebook}}{\channelPMeasure_{\channelOut^\codebookBlocklength}}(\channelOutAlphElement^\codebookBlocklength)
      -
      1
    } 
>
\exp(-\codebookBlocklength\proofconstantOne)
\right)
\leq
\\
\label{typical-bound-final-result}
\Bigg(
  1
  +
  \sqrt{\frac{3 \pi}{2}}
  \exp\left(
    \frac{3}{4}
    \exp(-\codebookBlocklength(\codebookRate-\mutualInformation{\channelIn}{\channelOut} - \typicalityParam - 2\proofconstantTwo) )
  \right)
  \\
  \hphantom{{}+{}}\cdot\exp\left(
    -\frac{1}{2}\codebookBlocklength(\codebookRate-\mutualInformation{\channelIn}{\channelOut} - \typicalityParam - 2\proofconstantTwo)
  \right)
  \\
  +
  \exp(-\exp(\codebookBlocklength\proofconstantTwo))
\Bigg)
\cdot\exp\left(
  -
  \exp(\codebookBlocklength(\proofconstantTwo-\proofconstantOne))
\right),
\end{multline}
as long as $\codebookBlocklength$ is sufficiently large such that $\exp(\codebookBlocklength(\codebookRate-\mutualInformation{\channelIn}{\channelOut} - \typicalityParam))/6 \geq 1$.

We are now ready to put everything together:
\begin{shownto}{arxiv}
Considering (\ref{split-typical-atypical}), (\ref{proof:firstorder-atypical-probability-statement}) and (\ref{typical-bound-final-result}), an application of the union bound yields the sum of (\ref{proof:firstorder-atypical-probability-statement}) and (\ref{typical-bound-final-result}) as an upper bound for
$
\codebookPMeasure_\codebook \left(
  \totalvariation{\codebookPMeasure_{\channelOut^\codebookBlocklength | \codebook} - \channelPMeasure_{\channelOut^\codebookBlocklength}}
  >
  3\exp(-\codebookBlocklength\proofconstantOne)
\right)
$.

We choose $\typicalityParam < \codebookRate - \mutualInformation{\channelIn}{\channelOut}$, then $\proofconstantOne < (\codebookRate - \mutualInformation{\channelIn}{\channelOut} -\typicalityParam)/2$ small enough to satisfy (\ref{proof:resdirect-renyidiv-bound}), then $\proofconstantTwo$ such that $\proofconstantOne < \proofconstantTwo < (\codebookRate - \mutualInformation{\channelIn}{\channelOut} -\typicalityParam)/2$, and finally we choose $\finalconstOne < \proofconstantOne$ and $\finalconstTwo < \min(\codebookRate - \proofconstantOne, \proofconstantTwo-\proofconstantOne)$. With these choices, we get (\ref{thm:resdirect-probstatement}) for all sufficiently large $\codebookBlocklength$, thereby concluding the proof.
\end{shownto}
\begin{shownto}{conference}
Considering (\ref{split-typical-atypical}), (\ref{proof:firstorder-atypical-probability-statement}) and (\ref{typical-bound-final-result}), an application of the union bound yields an upper bound for
$
\codebookPMeasure_\codebook \left(
  \totalvariation{\codebookPMeasure_{\channelOut^\codebookBlocklength | \codebook} - \channelPMeasure_{\channelOut^\codebookBlocklength}}
  >
  3\exp(-\codebookBlocklength\proofconstantOne)
\right)
$.
This bound implies (\ref{thm:resdirect-probstatement}) for all sufficiently large $\codebookBlocklength$ if we choose $\typicalityParam < \codebookRate - \mutualInformation{\channelIn}{\channelOut}$, then $\proofconstantOne < (\codebookRate - \mutualInformation{\channelIn}{\channelOut} -\typicalityParam)/2$ small enough to satisfy the previous condition necessary to ensure (\ref{typical-bound-final-result}), then $\proofconstantTwo$ such that $\proofconstantOne < \proofconstantTwo < (\codebookRate - \mutualInformation{\channelIn}{\channelOut} -\typicalityParam)/2$, then $\finalconstOne < \proofconstantOne$ and finally $\finalconstTwo < \min(\codebookRate - \proofconstantOne, \proofconstantTwo-\proofconstantOne)$.
\end{shownto}
\end{proof}
\begin{shownto}{conference}
\begin{remark}
\label{remark:firstorder}
The requirement that the moment-generating function of the information density exists is needed only to ensure the doubly exponential convergence in (\ref{thm:resdirect-probstatement}). We can instead put the weaker requirement that $\mutualInformation{\channelIn}{\channelOut}$ exists and is finite and replace (\ref{thm:resdirect-probstatement}) with the weaker version
\begin{align*}
\lim\limits_{\codebookBlocklength \rightarrow \infty}
\codebookPMeasure_\codebook \left(
  \totalvariation{
    \codebookPMeasure_{\channelOut^\codebookBlocklength | \codebook} - \channelPMeasure_{\channelOut^\codebookBlocklength}
  }
  >
  \lemmaconst
\right)
=
0
\end{align*}
for all $\lemmaconst > 0$. This is possible by a slight modification of the proof of Theorem~\ref{theorem:firstorder}, in which we use the law of large numbers to show that
$\channelPMeasure_{\channelIn^\codebookBlocklength, \channelOut^\codebookBlocklength}
  (\channelInAlph^\codebookBlocklength \times \channelOutAlph^\codebookBlocklength \setminus \typicalSet{\typicalityParam})$
vanishes with $\codebookBlocklength \rightarrow \infty$ instead of applying the Chernoff bound.
\end{remark}
\end{shownto}
\begin{shownto}{arxiv}
The existence of the moment-generating function is only needed to ensure the doubly exponential convergence in (\ref{thm:resdirect-probstatement}). In fact, modifying the preceding proof slightly, we can also establish the following variation of Theorem~\ref{theorem:firstorder}.
\begin{theorem}
\label{theorem:firstorder-weakened}
Given a channel $\channel = ((\channelInAlph, \sigmaAlgebraIn), (\channelOutAlph, \sigmaAlgebraOut), \channelKernel)$, an input distribution $\channelPMeasure_\channelIn$ such that $\mutualInformation{\channelIn}{\channelOut}$ exists and is finite, there exist $\finalconstOne > 0$ and $\finalconstTwo > 0$ such that for large enough block lengths $\codebookBlocklength$, the randomized codebook distributions of block length $\codebookBlocklength$ and rate $\codebookRate$ satisfy, for every $\lemmaconst > 0$,
\begin{align}
\label{thm:resdirect-probstatement-weakened}
\lim\limits_{\codebookBlocklength \rightarrow \infty}
\codebookPMeasure_\codebook \left(
  \totalvariation{
    \codebookPMeasure_{\channelOut^\codebookBlocklength | \codebook} - \channelPMeasure_{\channelOut^\codebookBlocklength}
  }
  >
  \lemmaconst
\right)
=
0
\end{align}
\end{theorem}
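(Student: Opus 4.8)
The plan is to run the proof of Theorem~\ref{theorem:firstorder} essentially unchanged, keeping the decomposition~(\ref{split-typical-atypical}) of $\totalvariation{\codebookPMeasure_{\channelOut^\codebookBlocklength | \codebook} - \channelPMeasure_{\channelOut^\codebookBlocklength}}$ into a typical part $\Expectation_{\channelPMeasure_{\channelOut^\codebookBlocklength}} \positive{\RNDerivative{\codebookPMeasure_{1,\codebook}}{\channelPMeasure_{\channelOut^\codebookBlocklength}}(\channelOutAlphElement^\codebookBlocklength) - 1}$ and an atypical part $\codebookPMeasure_{2,\codebook}(\channelOutAlph^\codebookBlocklength)$, bounding the two $\codebookPMeasure_\codebook$-probabilities separately and combining them by a union bound. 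As in Theorem~\ref{theorem:firstorder} one takes a rate $\codebookRate > \mutualInformation{\channelIn}{\channelOut}$ and fixes $\typicalityParam$ with $0 < \typicalityParam < \codebookRate - \mutualInformation{\channelIn}{\channelOut}$; with $\rateInfSubst := \exp(\codebookBlocklength(\codebookRate-\mutualInformation{\channelIn}{\channelOut}-\typicalityParam))$ one then has $\rateInfSubst \to \infty$. The \emph{only} place the proof of Theorem~\ref{theorem:firstorder} used the moment-generating function hypothesis is the large-deviations estimate culminating in~(\ref{atypical-bound-final-result}); I would replace that one step by the weak law of large numbers, which needs nothing more than $\mutualInformation{\channelIn}{\channelOut} < \infty$ but provides no rate, which is exactly why only the weaker conclusion~(\ref{thm:resdirect-probstatement-weakened}) survives.

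For the atypical part, I would argue as follows. Because $\kernelPower{\channelKernel}{\codebookBlocklength}(\channelInAlphElement^\codebookBlocklength,\cdot) \ll \channelPMeasure_{\channelOut^\codebookBlocklength}$ factorizes coordinatewise, $\informationDensity{\channelInAlphElement^\codebookBlocklength}{\channelOutAlphElement^\codebookBlocklength} = \sum_{\blockIndex=1}^\codebookBlocklength \informationDensity{\channelInAlphElement_\blockIndex}{\channelOutAlphElement_\blockIndex}$, and under $\channelPMeasure_{\channelIn^\codebookBlocklength, \channelOut^\codebookBlocklength}$ the random variables $\informationDensity{\channelIn_\blockIndex}{\channelOut_\blockIndex}$ are i.i.d.\ with common finite mean $\mutualInformation{\channelIn}{\channelOut}$. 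Since $\channelInAlph^\codebookBlocklength \times \channelOutAlph^\codebookBlocklength \setminus \typicalSet{\typicalityParam} = \{(\channelInAlphElement^\codebookBlocklength,\channelOutAlphElement^\codebookBlocklength): \informationDensity{\channelInAlphElement^\codebookBlocklength}{\channelOutAlphElement^\codebookBlocklength}/\codebookBlocklength > \mutualInformation{\channelIn}{\channelOut} + \typicalityParam\}$ with $\typicalSet{\typicalityParam}$ as in~(\ref{def:typicalset}), the weak law of large numbers gives $\lemmaexpectation_\codebookBlocklength := \channelPMeasure_{\channelIn^\codebookBlocklength, \channelOut^\codebookBlocklength}(\channelInAlph^\codebookBlocklength \times \channelOutAlph^\codebookBlocklength \setminus \typicalSet{\typicalityParam}) \to 0$. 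Because $\Expectation_{\codebookPMeasure_\codebook}[\codebookPMeasure_{2,\codebook}(\channelOutAlph^\codebookBlocklength)] = \lemmaexpectation_\codebookBlocklength$, Markov's inequality already yields, for every fixed $\lemmaconst > 0$, $\codebookPMeasure_\codebook(\codebookPMeasure_{2,\codebook}(\channelOutAlph^\codebookBlocklength) > \lemmaconst/2) \leq 2\lemmaexpectation_\codebookBlocklength/\lemmaconst \to 0$; alternatively, once $\codebookBlocklength$ is large enough that $2\lemmaexpectation_\codebookBlocklength < \lemmaconst$, Lemma~\ref{lemma:atypical} applied with its $\lemmaexpectation$ equal to $\lemmaexpectation_\codebookBlocklength$ and its $\lemmaconst$ equal to $1$ gives the same conclusion.

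For the typical part, I would invoke Lemma~\ref{lemma:typical} with the same $\typicalityParam$, with its parameter $\lemmaconst$ taken to be $\lemmaconst/2$ and with $\chernoffParam := \sqrt{\rateInfSubst}$; the hypothesis $\rateInfSubst/(6\chernoffParam) \geq 1$ then reads $\sqrt{\rateInfSubst} \geq 6$, which holds for all large $\codebookBlocklength$ since $\rateInfSubst \to \infty$. With this choice the right-hand side of~(\ref{lemma:typicalbound-probstatement}) becomes $\big(1 + \sqrt{3\pi/2}\exp(3/4) + \exp(-\sqrt{\rateInfSubst})\big)\exp(-(\lemmaconst/2)\sqrt{\rateInfSubst})$, whose prefactor stays bounded while $\exp(-(\lemmaconst/2)\sqrt{\rateInfSubst}) \to 0$, so $\codebookPMeasure_\codebook\big(\Expectation_{\channelPMeasure_{\channelOut^\codebookBlocklength}} \positive{\RNDerivative{\codebookPMeasure_{1,\codebook}}{\channelPMeasure_{\channelOut^\codebookBlocklength}}(\channelOutAlphElement^\codebookBlocklength) - 1} > \lemmaconst/2\big) \to 0$. (Any $\chernoffParam \to \infty$ with $\chernoffParam \leq \rateInfSubst/6$ works; $\chernoffParam = \sqrt{\rateInfSubst}$ just keeps the prefactor explicit.) A union bound on~(\ref{split-typical-atypical}) then combines the two estimates into $\codebookPMeasure_\codebook(\totalvariation{\codebookPMeasure_{\channelOut^\codebookBlocklength | \codebook} - \channelPMeasure_{\channelOut^\codebookBlocklength}} > \lemmaconst) \to 0$ for every fixed $\lemmaconst > 0$, which is~(\ref{thm:resdirect-probstatement-weakened}).

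I do not expect a genuine obstacle: this is a strict simplification of the proof of Theorem~\ref{theorem:firstorder}. The single point requiring care is that the weak law supplies no decay rate for $\lemmaexpectation_\codebookBlocklength$. Consequently I must not route the atypical part through Lemma~\ref{lemma:atypical} in a form that needs $\lemmaexpectation_\codebookBlocklength\exp(\codebookBlocklength\codebookRate) \to \infty$ — thresholding at a multiple of $\lemmaexpectation_\codebookBlocklength$, or simply using Markov's inequality, avoids this — and the Chernoff parameter $\chernoffParam$ in Lemma~\ref{lemma:typical} must be chosen to grow with $\codebookBlocklength$ while respecting $\rateInfSubst/(6\chernoffParam) \geq 1$. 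These are precisely the reasons the doubly-exponential decay of Theorem~\ref{theorem:firstorder} cannot be retained, leaving only the convergence-in-probability statement~(\ref{thm:resdirect-probstatement-weakened}).
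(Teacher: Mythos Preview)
Your proposal is correct and follows essentially the same approach as the paper: the same decomposition~(\ref{split-typical-atypical}), the weak law of large numbers combined with Markov's inequality for the atypical part, and Lemma~\ref{lemma:typical} for the typical part, concluded by a union bound. The only cosmetic difference is that the paper simply reuses the already-established bound~(\ref{typical-bound-final-result}) (obtained with $\chernoffParam = \exp(\codebookBlocklength\proofconstantTwo)$ and exponentially small threshold) and then weakens it, whereas you re-apply Lemma~\ref{lemma:typical} directly with $\chernoffParam = \sqrt{\rateInfSubst}$ and threshold $\lemmaconst/2$; both choices work and yield the same conclusion.
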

\begin{proof}
The statement (\ref{thm:resdirect-probstatement}) is proven with an application of the union bound, using as ingredients (\ref{split-typical-atypical}), (\ref{proof:firstorder-atypical-probability-statement}) and (\ref{typical-bound-final-result}). (\ref{split-typical-atypical}) and (\ref{typical-bound-final-result}) do not require that the moment-generating function of the information density exists and moreover, (\ref{typical-bound-final-result}) can be weakened to
\[
\lim\limits_{\codebookBlocklength \rightarrow \infty}
\codebookPMeasure_\codebook \left(
  \int\nolimits_{\channelOutAlph^\codebookBlocklength}
    \positive{
      \RNDerivative{\codebookPMeasure_{1,\codebook}}{\channelPMeasure_{\channelOut^\codebookBlocklength}}(\channelOutAlphElement^\codebookBlocklength)
      -
      1
    } 
  \channelPMeasure_{\channelOut^\codebookBlocklength}(d\channelOutAlphElement^\codebookBlocklength)
>
\frac{\lemmaconst}{2}
\right)
=
0
\]
for any $\lemmaconst > 0$. In order to find a suitable replacement for (\ref{proof:firstorder-atypical-probability-statement}), we consider
\begin{multline*}
\Expectation_{\codebookPMeasure_\codebook}(\codebookPMeasure_{2,\codebook}(\channelOutAlph^\codebookBlocklength))
=
\channelPMeasure_{\channelIn^\codebookBlocklength, \channelOut^\codebookBlocklength}
  (\channelInAlph^\codebookBlocklength \times \channelOutAlph^\codebookBlocklength \setminus \typicalSet{\typicalityParam})
=
\\
\channelPMeasure_{\channelIn^\codebookBlocklength, \channelOut^\codebookBlocklength}
  \left(\left\{
    (\channelInAlphElement^\codebookBlocklength, \channelOutAlphElement^\codebookBlocklength):
    \frac{1}{\codebookBlocklength}
    \sum\limits_{\blockIndex=1}^\codebookBlocklength
      \informationDensity{\channelInAlphElement_\blockIndex}{\channelOutAlphElement_\blockIndex}
      -
      \mutualInformation{\channelIn}{\channelOut}
      >
      \typicalityParam
  \right\}\right)
\end{multline*}
and note that by the law of large numbers, it vanishes for any $\typicalityParam > 0$ as $\codebookBlocklength$ tends to infinity. So by Markov's inequality
\begin{align*}
\codebookPMeasure_\codebook\left(\codebookPMeasure_{2,\codebook}(\channelOutAlph^\codebookBlocklength) > \frac{\lemmaconst}{2}\right)
\leq
2
\frac{\Expectation_{\codebookPMeasure_\codebook}(\codebookPMeasure_{2,\codebook}(\channelOutAlph^\codebookBlocklength))}
     {\lemmaconst}
\end{align*}
also vanishes for any $\lemmaconst > 0$. Thus, applying the union bound similarly as in the proof of Theorem~\ref{theorem:firstorder}, we can derive (\ref{thm:resdirect-probstatement-weakened}).
\end{proof}
\end{shownto}
\begin{shownto}{arxiv}
\begin{proof}[Proof of Theorem~\ref{theorem:secondorder}]
We consider the typical set as defined in~(\ref{def:typicalset}), with
\begin{align}
\label{def:secondorder-typicalityparam}
\typicalityParam
:=
\sqrt{\frac{\channelDispersion}{\codebookBlocklength}}
\normalcdfComplementInverse(\secondOrderTVdiff)
+
\secondOrderParamD
\frac{\log \codebookBlocklength}{\codebookBlocklength}.
\end{align}

In preparation for bounding the atypical term in~(\ref{split-typical-atypical}), we observe
\begin{align}
&\hphantom{{}={}}
\channelPMeasure_{\channelIn^\codebookBlocklength, \channelOut^\codebookBlocklength}(
  \channelInAlph^\codebookBlocklength \times \channelOutAlph^\codebookBlocklength \setminus \typicalSet{\typicalityParam}
)
\nonumber
\\
&=
\channelPMeasure_{\channelIn^\codebookBlocklength, \channelOut^\codebookBlocklength}\left(
  \left\{
    (\channelInAlphElement^\codebookBlocklength,\channelOutAlphElement^\codebookBlocklength)
    :
    \informationDensity{\channelInAlphElement^\codebookBlocklength}{\channelOutAlphElement^\codebookBlocklength}/\codebookBlocklength
    >
    \mutualInformation{\channelIn}{\channelOut} + \typicalityParam
  \right\}
\right)
\nonumber
\\
\label{proof:secondorder-atypical-subst}
&\begin{aligned}
=
\channelPMeasure_{\channelIn^\codebookBlocklength, \channelOut^\codebookBlocklength}\Bigg(
  \Bigg\{
    &(\channelInAlphElement^\codebookBlocklength,\channelOutAlphElement^\codebookBlocklength)
    :
    \sum\limits_{\generalSummationIndex=1}^\codebookBlocklength
      \frac{1}{\codebookBlocklength}
      \Big(
        \informationDensity{\channelInAlphElement_\generalSummationIndex}{\channelOutAlphElement_\generalSummationIndex}
        -
        \mutualInformation{\channelIn}{\channelOut}
      \Big)
      \sqrt{\frac{\codebookBlocklength}{\channelDispersion}}
    \\
    &>
    \normalcdfComplementInverse(\secondOrderTVdiff)
    +
    \secondOrderParamD
    \frac{\log \codebookBlocklength}{\sqrt{\codebookBlocklength\channelDispersion}}
  \Bigg\}
\Bigg)
\end{aligned}
\\
\label{proof:secondorder-atypical-berry-esseen}
&\leq
\normalcdfComplement\left(
  \normalcdfComplementInverse(\secondOrderTVdiff)
  +
  \secondOrderParamD
  \frac{\log \codebookBlocklength}{\sqrt{\codebookBlocklength\channelDispersion}}
\right)
+
\frac{\channelThirdMoment}
     {\channelDispersion^{\frac{3}{2}} \sqrt{\codebookBlocklength}}
=
\lemmaexpectation,
\end{align}
where (\ref{proof:secondorder-atypical-subst}) follows by substituting (\ref{def:secondorder-typicalityparam}) and (\ref{proof:secondorder-atypical-berry-esseen}) by the Berry-Esseen Theorem. Knowing this, we apply Lemma~\ref{lemma:atypical} with $\lemmaconst := 1/\sqrt{\codebookBlocklength}$ and get
\begin{align}
\label{proof:secondorder-atypicalbound}
\codebookPMeasure_\codebook\big(
  \codebookPMeasure_{2,\codebook}(\channelOutAlph^\codebookBlocklength)
  >
  \lemmaexpectation(1+1/\sqrt{\codebookBlocklength})
\big)
\leq
\exp\left(
  -\frac{1}{3}
  \codebookBlocklength
  \lemmaexpectation
  \exp(\codebookBlocklength\codebookRate)
\right).
\end{align}
In order to get a bound for the typical part of~(\ref{split-typical-atypical}), we apply Lemma~\ref{lemma:typical} with $\chernoffParam := \exp(\frac{\codebookBlocklength}{2}(\codebookRate-\mutualInformation{\channelIn}{\channelOut}-\typicalityParam))$ and $\lemmaconst := 1/\sqrt{\codebookBlocklength}$, which yields
\begin{align}
&\hphantom{{}={}}
\codebookPMeasure_\codebook\left(
  \Expectation_{\channelPMeasure_{\channelOut^\codebookBlocklength}}
    \positive{
      \RNDerivative{\codebookPMeasure_{1,\codebook}}{\channelPMeasure_{\channelOut^\codebookBlocklength}}(\channelOutAlphElement^\codebookBlocklength)
      -
      1
    } 
  >
  1/\sqrt{\codebookBlocklength}
\right)
\\
\label{proof:secondorder-typicalbound-lemmaapplication}
&\begin{aligned}
\leq
&\left(
  1
  +
  \sqrt{\frac{3\pi}{2}}
  \exp\left(
  \frac{3}{4}
\right) + \exp\left(
    -\frac{\codebookBlocklength}{2}(\codebookRate - \mutualInformation{\channelIn}{\channelOut} - \typicalityParam)
  \right)
\right)
\\
&\cdot\exp\left(
  -\frac{1}{\sqrt{n}}
  \exp\left(
    \frac{1}{2}
    \codebookBlocklength(\codebookRate-\mutualInformation{\channelIn}{\channelOut}-\typicalityParam)
  \right)
\right)
\end{aligned}
\\
\label{proof:secondorder-typicalbound-subst}
&\begin{aligned}
=
&\left(
  1
  +
  \sqrt{\frac{3\pi}{2}}
  \exp\left(
    \frac{3}{4}
  \right)
  +
  \codebookBlocklength^{-\frac{1}{2}(\secondOrderParamC-\secondOrderParamD)}
\right)
\exp\left(
  -\codebookBlocklength^{
    \frac{1}{2}
    (\secondOrderParamC - \secondOrderParamD - 1)
  }
\right)
\end{aligned}
\\
\label{proof:secondorder-typicalbound-final}
&\leq
\left(
  \sqrt{\frac{3\pi}{2}}
  \exp\left(
    \frac{3}{4}
  \right)
  +
  \frac{7}{6}
\right)
\exp\left(
  -\codebookBlocklength^{
    \frac{1}{2}
    (\secondOrderParamC - \secondOrderParamD - 1)
  }
\right),
\end{align}
where (\ref{proof:secondorder-typicalbound-lemmaapplication}) is the application of Lemma~\ref{lemma:typical} taking into account the condition
$\codebookBlocklength^{(\secondOrderParamC-\secondOrderParamD)/2} \geq 6$
and (\ref{proof:secondorder-typicalbound-subst}) follows by substituting (\ref{def:secondorder-rate}) and (\ref{def:secondorder-typicalityparam}). The inequality (\ref{proof:secondorder-typicalbound-final}) holds because of the condition
$\codebookBlocklength^{(\secondOrderParamC-\secondOrderParamD)/2} \geq 6$
.

Finally, we arrive at (\ref{theorem:secondorder-probstatement}) by combining (\ref{proof:secondorder-atypicalbound}) and (\ref{proof:secondorder-typicalbound-final}) using the union bound.
\end{proof}
\end{shownto}

\section{Converse Result}
\label{section:converse}

The main result of this section is Theorem~\ref{theorem:converse}, the converse result for resolvability of continuous channels. We first prove Lemma~\ref{lemma:converse-finite}, a version of the theorem in which only a finite output alphabet is considered, and then show how the statement can be generalized to continuous alphabets by looking at a sequence of discrete approximations of the channel.

\begin{theorem}
\label{theorem:converse}
Let $((\channelInAlph, \sigmaAlgebraIn), (\channelOutAlph, \sigmaAlgebraOut), \channelKernel)$ be a channel such that $\channelInAlph$ is compact and for each $\alphSubset \subseteq \channelOutAlph$, $\channelInAlphElement \mapsto \channelKernel(\channelInAlphElement, \alphSubset)$ is a continuous mapping. Let $\channelPMeasure_\channelOut$ be an output distribution and $(\codebook_\blocklengthIndex)_{\blocklengthIndex \geq 1}$ a sequence of codebooks with strictly increasing block lengths $\codebookBlocklength_\blocklengthIndex$ and fixed rate $\codebookRate$ such that
$
\totalvariation{
  \codebookPMeasure_{\channelOut^{\codebookBlocklength_\blocklengthIndex} | \codebook_{\blocklengthIndex}}
  -
  \channelPMeasure_{\channelOut^{\codebookBlocklength_\blocklengthIndex}}
}
=
\lemmaconst_\blocklengthIndex \leq 1/4
$
with $\lemmaconst_\blocklengthIndex \rightarrow 0$. Then there is a joint probability measure $\channelPMeasure_{\channelIn,\channelOut}$ with marginal $\channelPMeasure_{\channelOut}$ for $\channelOut$ such that $\channelPMeasure_{\channelIn}$ induces $\channelPMeasure_{\channelOut}$ through $\channelKernel$ and
$
\mutualInformationWrt{\channelPMeasure_{\channelIn,\channelOut}}{\channelIn}{\channelOut} \leq \codebookRate
$.
\end{theorem}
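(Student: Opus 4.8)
The plan is to reduce the whole statement to the case of a \emph{finite} output alphabet, which is essentially the elementary single-letterization argument we used in the finite-alphabet setting, and then to remove the finiteness assumption by approximating $\channelOutAlph$ by finer and finer finite measurable partitions. So Lemma~\ref{lemma:converse-finite} will establish the conclusion of the theorem when $\channelOutAlph$ is finite (with $\channelInAlph$ still compact and $\channelInAlphElement \mapsto \channelKernel(\channelInAlphElement,\alphSubset)$ continuous), and Theorem~\ref{theorem:converse} will follow by quantization.

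For the finite-output lemma, fix $\blocklengthIndex$ and abbreviate $\codebookBlocklength := \codebookBlocklength_\blocklengthIndex$, $\lemmaconst := \lemmaconst_\blocklengthIndex$. Under $\codebookPMeasure_{\channelIn^\codebookBlocklength | \codebook_\blocklengthIndex}$ the input is uniform on the (at most $\exp(\codebookBlocklength \codebookRate)$ many) codewords, so it is finitely supported and $H(\channelIn^\codebookBlocklength) \leq \codebookBlocklength\codebookRate$, whence $\mutualInformation{\channelIn^\codebookBlocklength}{\channelOut^\codebookBlocklength} \leq \codebookBlocklength\codebookRate$. Let $\channelPMeasure_\channelIn^{(\blocklengthIndex)}$ be the average over $\blockIndex = 1,\dots,\codebookBlocklength$ of the $\blockIndex$-th coordinate marginals of this input, a distribution on the compact set $\channelInAlph$, and let $\channelPMeasure_{\channelIn,\channelOut}^{(\blocklengthIndex)}$, $\channelPMeasure_\channelOut^{(\blocklengthIndex)}$ be the joint and output distributions it induces through $\channel$. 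Since marginalization does not increase variational distance, each coordinate output marginal of $\codebookPMeasure_{\channelOut^\codebookBlocklength | \codebook_\blocklengthIndex}$ is within $\lemmaconst$ of $\channelPMeasure_\channelOut$, hence $\totalvariation{\channelPMeasure_\channelOut^{(\blocklengthIndex)} - \channelPMeasure_\channelOut} \leq \lemmaconst$. Because the channel is memoryless, $H(\channelOut^\codebookBlocklength \mid \channelIn^\codebookBlocklength) = \sum_{\blockIndex} H(\channelOut_\blockIndex \mid \channelIn_\blockIndex) = \codebookBlocklength\, H_{\channelPMeasure_\channelIn^{(\blocklengthIndex)}}(\channelOut \mid \channelIn)$, so from $\mutualInformation{\channelIn^\codebookBlocklength}{\channelOut^\codebookBlocklength} = H(\channelOut^\codebookBlocklength) - H(\channelOut^\codebookBlocklength\mid\channelIn^\codebookBlocklength) \leq \codebookBlocklength\codebookRate$ we get $H_{\channelPMeasure_\channelIn^{(\blocklengthIndex)}}(\channelOut\mid\channelIn) \geq H(\channelOut^\codebookBlocklength)/\codebookBlocklength - \codebookRate$. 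Using the finiteness of $\channelOutAlph$ and the standard continuity estimate for entropy under variational distance (applicable since $\lemmaconst \leq 1/4 < 1/2$) on the product space $\channelOutAlph^\codebookBlocklength$ and on $\channelOutAlph$, one bounds $H(\channelOut^\codebookBlocklength) \geq \codebookBlocklength H(\channelPMeasure_\channelOut) - \codebookBlocklength\lemmaconst\log\cardinality{\channelOutAlph} - h_b(\lemmaconst)$ and $H(\channelPMeasure_\channelOut^{(\blocklengthIndex)}) \leq H(\channelPMeasure_\channelOut) + \lemmaconst\log\cardinality{\channelOutAlph} + h_b(\lemmaconst)$, with $h_b$ the binary entropy; subtracting yields $\mutualInformationWrt{\channelPMeasure_{\channelIn,\channelOut}^{(\blocklengthIndex)}}{\channelIn}{\channelOut} = H(\channelPMeasure_\channelOut^{(\blocklengthIndex)}) - H_{\channelPMeasure_\channelIn^{(\blocklengthIndex)}}(\channelOut\mid\channelIn) \leq \codebookRate + \varepsilon_\blocklengthIndex$ with $\varepsilon_\blocklengthIndex \to 0$ as $\blocklengthIndex \to \infty$.

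Now I would finish the finite-output case by a weak-compactness argument. Since $\channelInAlph$ is compact, the space of probability measures on it is weakly compact and metrizable, so some subsequence $\channelPMeasure_\channelIn^{(\blocklengthIndex_k)}$ converges weakly to a measure $\channelPMeasure_\channelIn$. Because $\channelInAlphElement \mapsto \channelKernel(\channelInAlphElement,\alphSubset)$ is bounded and continuous for every $\alphSubset$, the induced joint laws $\channelPMeasure_{\channelIn,\channelOut}^{(\blocklengthIndex_k)}$ converge weakly to the joint law induced by $\channelPMeasure_\channelIn$; in particular its output marginal is $\lim_k \channelPMeasure_\channelOut^{(\blocklengthIndex_k)} = \channelPMeasure_\channelOut$, i.e.\ $\channelPMeasure_\channelIn$ induces $\channelPMeasure_\channelOut$ through $\channel$. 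Mutual information is lower semicontinuous under weak convergence of the joint law (a standard lower semicontinuity property of relative entropy on Polish spaces — here the standing Polish assumption enters), so $\mutualInformationWrt{\channelPMeasure_{\channelIn,\channelOut}}{\channelIn}{\channelOut} \leq \liminf_k (\codebookRate + \varepsilon_{\blocklengthIndex_k}) = \codebookRate$, proving Lemma~\ref{lemma:converse-finite}. For the general output alphabet, fix a refining sequence $\mathcal{Q}_1 \subseteq \mathcal{Q}_2 \subseteq \cdots$ of finite measurable partitions of $\channelOutAlph$ generating $\sigmaAlgebraOut$, write $[\channelOut]_m$ for the quantization of $\channelOut$ to the cells of $\mathcal{Q}_m$ and $\channel_m$ for the channel with finite output alphabet $\mathcal{Q}_m$ obtained from $\channel$ by this quantization (it inherits the continuity hypothesis). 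Quantization does not increase variational distance, so the quantized codebook outputs still lie within $\lemmaconst_\blocklengthIndex$ of the quantized target, and Lemma~\ref{lemma:converse-finite} applied to $\channel_m$ yields an input distribution $\channelPMeasure_\channelIn^{[m]}$ on $\channelInAlph$ inducing the quantized target through $\channel_m$ with $\mutualInformationWrt{\cdot}{\channelIn}{[\channelOut]_m} \leq \codebookRate$. Using weak compactness of $\mathcal P(\channelInAlph)$ once more, pass to a subsequence $\channelPMeasure_\channelIn^{[m]} \rightharpoonup \channelPMeasure_\channelIn$; for any $\alphSubset \in \mathcal{Q}_{m_0}$ one has $\int \channelKernel(\channelInAlphElement,\alphSubset)\, \channelPMeasure_\channelIn^{[m]}(d\channelInAlphElement) = \channelPMeasure_\channelOut(\alphSubset)$ for all $m \geq m_0$, and continuity of $\channelKernel(\cdot,\alphSubset)$ lets the left side tend to $\int \channelKernel(\channelInAlphElement,\alphSubset)\,\channelPMeasure_\channelIn(d\channelInAlphElement)$, so a Dynkin-class argument gives that $\channelPMeasure_\channelIn$ induces $\channelPMeasure_\channelOut$ through $\channel$. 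Finally, for each fixed $m_0$ and every $m \geq m_0$, data processing gives $\mutualInformationWrt{(\text{joint induced by }\channelPMeasure_\channelIn^{[m]})}{\channelIn}{[\channelOut]_{m_0}} \leq \codebookRate$; applying lower semicontinuity to the finite-output channel $\channel_{m_0}$ yields $\mutualInformationWrt{(\text{joint induced by }\channelPMeasure_\channelIn)}{\channelIn}{[\channelOut]_{m_0}} \leq \codebookRate$ for every $m_0$, and letting $m_0 \to \infty$ with $\mutualInformationWrt{\cdot}{\channelIn}{[\channelOut]_{m_0}} \uparrow \mutualInformationWrt{\cdot}{\channelIn}{\channelOut}$ along a generating refining sequence completes the proof, taking $\channelPMeasure_{\channelIn,\channelOut}$ to be the joint induced by $\channelPMeasure_\channelIn$ through $\channel$.

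The main obstacle is the passage to the limit in the mutual-information bound, which is genuinely a \emph{double} limit (the quantization level $m \to \infty$ and the weak-limit subsequence of inputs): mutual information is only lower semicontinuous under weak convergence, which is the ``wrong'' direction for an upper bound, so one cannot take limits termwise. The point of organizing the argument as above is precisely to sidestep this: lower semicontinuity is harmless when used at a \emph{fixed} quantization level (where the output alphabet is finite and the bound $\leq \codebookRate$ holds along the whole input sequence), so one first eliminates the input-subsequence limit to obtain $\mutualInformationWrt{\cdot}{\channelIn}{[\channelOut]_{m_0}} \leq \codebookRate$ for \emph{every} $m_0$, and only afterwards lets $m_0 \to \infty$, where the limit is monotone and hence causes no trouble. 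Secondary care is needed to verify that all the weak-convergence and continuity steps for the induced output and joint laws really do follow from the hypotheses that $\channelInAlphElement \mapsto \channelKernel(\channelInAlphElement,\alphSubset)$ is continuous for every $\alphSubset$ and that $\channelInAlph$ is compact (so that $\mathcal P(\channelInAlph)$ is weakly compact), and to check the elementary entropy-continuity estimates underlying the single-letterization in the finite-output case.
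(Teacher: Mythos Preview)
Your proof is correct. The finite-output lemma is essentially the paper's argument: the same single-letterization $nR \geq I(X^n;Y^n)$, the same use of memorylessness to write $H(Y^n\mid X^n)=n\,H_{\channelPMeasure_\channelIn^{(\blocklengthIndex)}}(Y\mid X)$, the same entropy-continuity estimate (the paper cites \cite[Lemma~2.7]{CsiszarInformation} in place of your $h_b$ form), and the same weak-compactness plus lower-semicontinuity conclusion.

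Where you genuinely diverge is in the passage from finite to general $\channelOutAlph$. The paper runs a \emph{nested diagonal subsequence} construction: it applies Lemma~\ref{lemma:converse-finite} at discretization level $k$ to the codebook subsequence already extracted at level $k-1$, so that the limiting joint laws $\channelPMeasure_{\channelIn,\channelOut}^{(k)}$ are automatically compatible as restrictions; these are then glued into a single measure on $\sigmaAlgebraIn\sigmaAlgebraProduct\sigmaAlgebraOut$ via Carath\'eodory extension, and the convergence $I_{\channelPMeasure_{\channelIn,\channelOut}^{(k)}}(X;Y)\to I_{\channelPMeasure_{\channelIn,\channelOut}}(X;Y)$ is quoted from \cite[Corollary~5.2.3]{gray2011entropy}. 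You instead apply the finite lemma \emph{independently} at each level $m$, extract one more weak limit $\channelPMeasure_\channelIn^{[m]}\rightharpoonup\channelPMeasure_\channelIn$, and resolve the double limit by your data-processing trick: freeze $m_0$, use $I(X;[Y]_{m_0})\leq I(X;[Y]_m)\leq R$ for all $m\geq m_0$, invoke lower semicontinuity at the fixed finite level $m_0$, and only then let $m_0\uparrow\infty$ monotonically. This buys you freedom from the nested-subsequence bookkeeping and the extension theorem; the paper's route buys a single limiting object constructed once and for all, at the cost of that bookkeeping. Both are standard and comparable in length.
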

\begin{lemma}
\label{lemma:converse-finite}
Let $((\channelInAlph, \sigmaAlgebraIn), (\channelOutAlph, \sigmaAlgebraOut), \channelKernel)$ be a channel such that $\channelInAlph$ is compact, $(\channelOutAlph, \sigmaAlgebraOut)$ is a finite discrete space and $\channelInAlphElement \mapsto \channelKernel(\channelInAlphElement, \cdot)$ is a continuous mapping from $\channelInAlph$ to the probability measures on $\channelOutAlph$. Let $\channelPMeasure_\channelOut$ be an output distribution, and $(\codebook_\blocklengthIndex)_{\blocklengthIndex \geq 1}$ be a sequence of codebooks with strictly increasing block lengths $\codebookBlocklength_\blocklengthIndex$ and fixed rate $\codebookRate$ such that
\begin{align}
\label{lemma:converse-finite-totvar}
\totalvariation{
  \codebookPMeasure_{\channelOut^{\codebookBlocklength_\blocklengthIndex} | \codebook_{\blocklengthIndex}}
  -
  \channelPMeasure_{\channelOut^{\codebookBlocklength_\blocklengthIndex}}
}
=
\lemmaconst_\blocklengthIndex \leq \frac{1}{4}
\end{align}
with $\lemmaconst_\blocklengthIndex \rightarrow 0$. Define
\begin{align}
\channelPMeasure_\channelIn^{(\blocklengthIndex)}
:=
\frac{1}{\codebookBlocklength_\blocklengthIndex}
\sum_{\blockIndex = 1}^{\codebookBlocklength_\blocklengthIndex}
  \codebookPMeasure_{\channelIn_\blockIndex | \codebook_\blocklengthIndex}
\end{align}
and $\channelPMeasure_{\channelIn, \channelOut}^{(\blocklengthIndex)}$ induced by $\channelPMeasure_\channelIn^{(\blocklengthIndex)}$ through $\channelKernel$. Then there is a strictly increasing sequence $(\blocklengthIndex_\blocklengthSubsequenceIndex)_{\blocklengthSubsequenceIndex \geq 1}$ such that $\channelPMeasure_{\channelIn, \channelOut}^{(\blocklengthIndex_\blocklengthSubsequenceIndex)}$ converges weakly to some $\channelPMeasure_{\channelIn, \channelOut}$, the marginal $\channelPMeasure_\channelIn$ induces the marginal $\channelPMeasure_\channelOut$ through $\channelKernel$ and $\mutualInformationWrt{\channelPMeasure_{\channelIn,\channelOut}}{\channelIn}{\channelOut} \leq \codebookRate$.
\end{lemma}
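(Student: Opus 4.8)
The plan is to combine a compactness argument on the space of input distributions with a single-letterization of the mutual information that is carried out \emph{per block} and then passed to the limit by lower semicontinuity. First, since $\channelInAlph$ is compact Polish, the set of Borel probability measures on it is weakly compact, so $(\channelPMeasure_\channelIn^{(\blocklengthIndex)})_\blocklengthIndex$ admits a subsequence $(\channelPMeasure_\channelIn^{(\blocklengthIndex_\blocklengthSubsequenceIndex)})_\blocklengthSubsequenceIndex$ converging weakly to some $\channelPMeasure_\channelIn$. Because $\channelOutAlph$ is finite and each $\channelInAlphElement \mapsto \channelKernel(\channelInAlphElement, \{\channelOutAlphElement\})$ is bounded and continuous, testing against bounded continuous functions on $\channelInAlph \times \channelOutAlph$ shows that $\channelPMeasure_{\channelIn,\channelOut}^{(\blocklengthIndex_\blocklengthSubsequenceIndex)}$ converges weakly to the joint distribution $\channelPMeasure_{\channelIn,\channelOut}$ induced by $\channelPMeasure_\channelIn$ through $\channelKernel$; in particular the $\channelIn$-marginal of $\channelPMeasure_{\channelIn,\channelOut}$ is $\channelPMeasure_\channelIn$.

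Next, for each coordinate $\blockIndex$ the $\blockIndex$-th marginal of $\channelPMeasure_{\channelOut^{\codebookBlocklength_\blocklengthIndex}}$ equals $\channelPMeasure_\channelOut$, so restricting the events in (\ref{lemma:converse-finite-totvar}) to cylinders over coordinate $\blockIndex$ gives $\totalvariation{\codebookPMeasure_{\channelOut_\blockIndex \mid \codebook_\blocklengthIndex} - \channelPMeasure_\channelOut} \leq \lemmaconst_\blocklengthIndex$, and averaging over $\blockIndex$ (the supremum of an average is at most the average of the suprema) yields $\totalvariation{\channelPMeasure_\channelOut^{(\blocklengthIndex)} - \channelPMeasure_\channelOut} \leq \lemmaconst_\blocklengthIndex$, where $\channelPMeasure_\channelOut^{(\blocklengthIndex)} = \frac{1}{\codebookBlocklength_\blocklengthIndex}\sum_{\blockIndex=1}^{\codebookBlocklength_\blocklengthIndex} \codebookPMeasure_{\channelOut_\blockIndex \mid \codebook_\blocklengthIndex}$ is the $\channelOut$-marginal of $\channelPMeasure_{\channelIn,\channelOut}^{(\blocklengthIndex)}$. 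Since $\lemmaconst_\blocklengthIndex \to 0$, the sequence $\channelPMeasure_\channelOut^{(\blocklengthIndex_\blocklengthSubsequenceIndex)}$ converges weakly to $\channelPMeasure_\channelOut$, but it also converges weakly to the $\channelOut$-marginal of $\channelPMeasure_{\channelIn,\channelOut}$, so the two coincide; that is, $\channelPMeasure_\channelIn$ induces $\channelPMeasure_\channelOut$ through $\channelKernel$.

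The main step is to bound $\mutualInformationWrt{\channelPMeasure_{\channelIn,\channelOut}^{(\blocklengthIndex)}}{\channelIn}{\channelOut}$. Write $\codebookPMeasure^{(\blocklengthIndex)}$ for the codebook-induced joint distribution $\codebookPMeasure_{\channelIn^{\codebookBlocklength_\blocklengthIndex}, \channelOut^{\codebookBlocklength_\blocklengthIndex} \mid \codebook_\blocklengthIndex}$ and $h$ for the binary entropy. Since the codeword index is uniform over at most $\exp(\codebookBlocklength_\blocklengthIndex \codebookRate)$ values, $\mutualInformationWrt{\codebookPMeasure^{(\blocklengthIndex)}}{\channelIn^{\codebookBlocklength_\blocklengthIndex}}{\channelOut^{\codebookBlocklength_\blocklengthIndex}} \leq \entropyWrt{\codebookPMeasure^{(\blocklengthIndex)}}{\channelIn^{\codebookBlocklength_\blocklengthIndex}} \leq \codebookBlocklength_\blocklengthIndex \codebookRate$. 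By memorylessness of the channel, $\entropyWrt{\codebookPMeasure^{(\blocklengthIndex)}}{\channelOut^{\codebookBlocklength_\blocklengthIndex} \mid \channelIn^{\codebookBlocklength_\blocklengthIndex}} = \sum_{\blockIndex=1}^{\codebookBlocklength_\blocklengthIndex} \entropyWrt{\codebookPMeasure^{(\blocklengthIndex)}}{\channelOut_\blockIndex \mid \channelIn_\blockIndex} = \codebookBlocklength_\blocklengthIndex \, \entropyWrt{\channelPMeasure_{\channelIn,\channelOut}^{(\blocklengthIndex)}}{\channelOut \mid \channelIn}$, the last equality because $\channelPMeasure_\channelIn^{(\blocklengthIndex)}$ averages the coordinate marginals $\codebookPMeasure_{\channelIn_\blockIndex \mid \codebook_\blocklengthIndex}$. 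Moreover, by the uniform continuity of entropy on the finite set $\channelOutAlph^{\codebookBlocklength_\blocklengthIndex}$ together with (\ref{lemma:converse-finite-totvar}) (recall $\lemmaconst_\blocklengthIndex \leq 1/4$), $\entropyWrt{\codebookPMeasure^{(\blocklengthIndex)}}{\channelOut^{\codebookBlocklength_\blocklengthIndex}} \geq \codebookBlocklength_\blocklengthIndex \, \entropyWrt{\channelPMeasure_\channelOut}{\channelOut} - \codebookBlocklength_\blocklengthIndex \lemmaconst_\blocklengthIndex \log \cardinality{\channelOutAlph} - h(\lemmaconst_\blocklengthIndex)$. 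Plugging these into $\mutualInformationWrt{\codebookPMeasure^{(\blocklengthIndex)}}{\channelIn^{\codebookBlocklength_\blocklengthIndex}}{\channelOut^{\codebookBlocklength_\blocklengthIndex}} = \entropyWrt{\codebookPMeasure^{(\blocklengthIndex)}}{\channelOut^{\codebookBlocklength_\blocklengthIndex}} - \entropyWrt{\codebookPMeasure^{(\blocklengthIndex)}}{\channelOut^{\codebookBlocklength_\blocklengthIndex} \mid \channelIn^{\codebookBlocklength_\blocklengthIndex}}$ and dividing by $\codebookBlocklength_\blocklengthIndex$ gives $\entropyWrt{\channelPMeasure_{\channelIn,\channelOut}^{(\blocklengthIndex)}}{\channelOut \mid \channelIn} \geq \entropyWrt{\channelPMeasure_\channelOut}{\channelOut} - \codebookRate - \lemmaconst_\blocklengthIndex \log \cardinality{\channelOutAlph} - h(\lemmaconst_\blocklengthIndex)/\codebookBlocklength_\blocklengthIndex$. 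On the other hand, applying the continuity estimate once more on $\channelOutAlph$ and using Step 2, $\entropyWrt{\channelPMeasure_\channelOut^{(\blocklengthIndex)}}{\channelOut} \leq \entropyWrt{\channelPMeasure_\channelOut}{\channelOut} + \lemmaconst_\blocklengthIndex \log \cardinality{\channelOutAlph} + h(\lemmaconst_\blocklengthIndex)$, so with $\mutualInformationWrt{\channelPMeasure_{\channelIn,\channelOut}^{(\blocklengthIndex)}}{\channelIn}{\channelOut} = \entropyWrt{\channelPMeasure_\channelOut^{(\blocklengthIndex)}}{\channelOut} - \entropyWrt{\channelPMeasure_{\channelIn,\channelOut}^{(\blocklengthIndex)}}{\channelOut \mid \channelIn}$ we obtain $\mutualInformationWrt{\channelPMeasure_{\channelIn,\channelOut}^{(\blocklengthIndex)}}{\channelIn}{\channelOut} \leq \codebookRate + \varepsilon_\blocklengthIndex$ for some $\varepsilon_\blocklengthIndex$ with $\varepsilon_\blocklengthIndex \to 0$ (using $\lemmaconst_\blocklengthIndex \to 0$ and $\codebookBlocklength_\blocklengthIndex \to \infty$).

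Finally, $\mutualInformationWrt{\cdot}{\channelIn}{\channelOut}$ is the Kullback-Leibler divergence between the joint distribution and the product of its marginals, and divergence is jointly lower semicontinuous with respect to weak convergence; since $\channelPMeasure_{\channelIn,\channelOut}^{(\blocklengthIndex_\blocklengthSubsequenceIndex)} \to \channelPMeasure_{\channelIn,\channelOut}$ weakly — and hence so do the products of the corresponding marginals — we conclude $\mutualInformationWrt{\channelPMeasure_{\channelIn,\channelOut}}{\channelIn}{\channelOut} \leq \liminf_\blocklengthSubsequenceIndex \mutualInformationWrt{\channelPMeasure_{\channelIn,\channelOut}^{(\blocklengthIndex_\blocklengthSubsequenceIndex)}}{\channelIn}{\channelOut} \leq \codebookRate$, which finishes the proof. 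The hard part is the third step: the total correlation of the block output cannot be controlled directly by its total variation distance $\lemmaconst_\blocklengthIndex$ to the product target $\channelPMeasure_{\channelOut^{\codebookBlocklength_\blocklengthIndex}}$, so instead one invokes the uniform continuity of entropy on the large-but-finite alphabet $\channelOutAlph^{\codebookBlocklength_\blocklengthIndex}$, which produces an error of order $\codebookBlocklength_\blocklengthIndex \lemmaconst_\blocklengthIndex \log \cardinality{\channelOutAlph}$ that is harmless only after normalization by $\codebookBlocklength_\blocklengthIndex$; a secondary subtlety is the lower semicontinuity of mutual information under weak convergence, which is exactly why $\channelOutAlph$ is taken finite here and the general case is reduced to this one by discretization.
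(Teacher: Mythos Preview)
Your proof is correct and follows essentially the same route as the paper's: extract a weakly convergent subsequence by compactness of $\channelInAlph$, push convergence to the joint via continuity of $\channelKernel$, bound the single-letter output marginal in total variation, use $\codebookBlocklength_\blocklengthIndex \codebookRate \geq \mutualInformationWrt{\codebookPMeasure^{(\blocklengthIndex)}}{\channelIn^{\codebookBlocklength_\blocklengthIndex}}{\channelOut^{\codebookBlocklength_\blocklengthIndex}}$ together with memorylessness and the uniform continuity of entropy on $\channelOutAlph^{\codebookBlocklength_\blocklengthIndex}$ to single-letterize, and finish with lower semicontinuity of mutual information. The only cosmetic difference is that the paper packages the two entropy-continuity corrections (for $\entropyWrt{\codebookPMeasure^{(\blocklengthIndex)}}{\channelOut^{\codebookBlocklength_\blocklengthIndex}}$ and for $\entropyWrt{\channelPMeasure_\channelOut^{(\blocklengthIndex)}}{\channelOut}$) into a single line via \cite[Lemma 2.7]{CsiszarInformation}, whereas you invoke them separately; the resulting error terms differ only in form and both vanish after normalization by $\codebookBlocklength_\blocklengthIndex$.
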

\begin{remark}
The compactness of $\channelInAlph$ and the continuity of $\channelInAlphElement \mapsto \channelKernel(\channelInAlphElement,\alphSubset)$ are technical conditions to ensure the convergence in Lemma~\ref{lemma:converse-finite}. It should be a subject of further research to explore whether these conditions can be dropped by using a more involved discretization or a wholly different technique.
\end{remark}
\begin{proof}[Proof of Lemma~\ref{lemma:converse-finite}]
Since $\channelInAlph$ is compact, the space of measures $\generalPMeasure$ on $(\channelInAlph, \sigmaAlgebraIn)$ such that $\generalPMeasure(\channelInAlph) \leq 1$ endowed with the weak topology is compact~\cite[Corollary 31.3]{BauerMeasure}. Therefore, $(\channelPMeasure_\channelIn^{(\blocklengthIndex)})_{\blocklengthIndex \geq 1}$ must have a convergent subsequence, or, put differently, there is a strictly increasing sequence $(\blocklengthIndex_\blocklengthSubsequenceIndex)_{\blocklengthSubsequenceIndex \geq 1}$ such that $(\channelPMeasure_\channelIn^{(\blocklengthIndex_\blocklengthSubsequenceIndex)})_{\blocklengthSubsequenceIndex \geq 1}$ converges weakly to some $\channelPMeasure_{\hat{\channelIn}}$. By~\cite[Theorem A.5.9]{dupuis2011weak}, $\channelPMeasure_{\channelIn, \channelOut}^{(\blocklengthIndex_\blocklengthSubsequenceIndex)}$ converges weakly to some $\channelPMeasure_{\hat{\channelIn}, \hat{\channelOut}}$ and the marginal $\channelPMeasure_{\hat{\channelIn}}$ induces the marginal $\channelPMeasure_{\hat{\channelOut}}$ through $\channelKernel$.
\begin{shownto}{arxiv}
We note
\begin{align*}
&\hphantom{{}={}}
\totalvariation{
  \codebookPMeasure_{\channelOut_\blockIndex | \codebook_\blocklengthIndex}
  -
  \channelPMeasure_\channelOut
}
\\
&=
\frac{1}{2}
\sum\limits_{\channelOutAlphElement \in \channelOutAlph} \absolute{
  \codebookPMeasure_{\channelOut_\blockIndex | \codebook_\blocklengthIndex}(\{\channelOutAlphElement\})
  -
  \channelPMeasure_\channelOut(\{\channelOutAlphElement\})
}
\\
&=
\frac{1}{2}
\sum\limits_{\channelOutAlphElement \in \channelOutAlph} \absolute{
  \sum_{\substack{\channelOutAlphElement^{\codebookBlocklength_\blocklengthIndex} \in \channelOutAlph^{\codebookBlocklength_\blocklengthIndex} \\
                 \channelOutAlphElement_\blockIndex = \channelOutAlphElement}}
    \left(
      \codebookPMeasure_{\channelOut^{\codebookBlocklength_\blocklengthIndex} | \codebook_\blocklengthIndex}(\{\channelOutAlphElement^{\codebookBlocklength_\blocklengthIndex}\})
      -
      \channelPMeasure_{\channelOut^{\codebookBlocklength_\blocklengthIndex}}(\{\channelOutAlphElement^{\codebookBlocklength_\blocklengthIndex}\})
    \right)
}
\\
&\leq
\frac{1}{2}
\sum\limits_{\channelOutAlphElement^{\codebookBlocklength_\blocklengthIndex} \in \channelOutAlph^{\codebookBlocklength_\blocklengthIndex}} \absolute{
  \codebookPMeasure_{\channelOut^{\codebookBlocklength_\blocklengthIndex} | \codebook_\blocklengthIndex}(\{\channelOutAlphElement^{\codebookBlocklength_\blocklengthIndex}\})
  -
  \channelPMeasure_{\channelOut^{\codebookBlocklength_\blocklengthIndex}}(\{\channelOutAlphElement^{\codebookBlocklength_\blocklengthIndex}\})
}
\\
&=
\totalvariation{
  \codebookPMeasure_{\channelOut^{\codebookBlocklength_\blocklengthIndex} | \codebook_{\blocklengthIndex}}
  -
  \channelPMeasure_{\channelOut^{\codebookBlocklength_\blocklengthIndex}}
}
\end{align*}
\end{shownto}
\begin{shownto}{conference}
We note that, by the triangle inequality,
$
\totalvariation{
  \codebookPMeasure_{\channelOut_\blockIndex | \codebook_\blocklengthIndex}
  -
  \channelPMeasure_\channelOut
}
\leq
\totalvariation{
  \codebookPMeasure_{\channelOut^{\codebookBlocklength_\blocklengthIndex} | \codebook_{\blocklengthIndex}}
  -
  \channelPMeasure_{\channelOut^{\codebookBlocklength_\blocklengthIndex}}
}
$
for any $\blockIndex$, and therefore
\end{shownto}
\begin{shownto}{arxiv}
and therefore
\begin{align}
\totalvariation{
  \channelPMeasure_\channelOut^{(\blocklengthIndex)}
  -
  \channelPMeasure_\channelOut
}
&=
\totalvariation{
  \frac{1}{\codebookBlocklength_\blocklengthIndex}
  \sum_{\blockIndex = 1}^{\codebookBlocklength_\blocklengthIndex}
    \codebookPMeasure_{\channelOut_\blockIndex | \codebook_\blocklengthIndex}
  -
  \channelPMeasure_\channelOut
}
\nonumber
\\
&\leq
\frac{1}{\codebookBlocklength_\blocklengthIndex}
\sum_{\blockIndex = 1}^{\codebookBlocklength_\blocklengthIndex}
\totalvariation{
  \codebookPMeasure_{\channelOut_\blockIndex | \codebook_\blocklengthIndex}
  -
  \channelPMeasure_\channelOut
}
\nonumber
\\
&\leq
\totalvariation{
  \codebookPMeasure_{\channelOut^{\codebookBlocklength_\blocklengthIndex} | \codebook_{\blocklengthIndex}}
  -
  \channelPMeasure_{\channelOut^{\codebookBlocklength_\blocklengthIndex}}
}.
\label{proof:converse-finite-totvar-bound}
\end{align}
\end{shownto}
\begin{shownto}{conference}
\begin{align}
&\hphantom{{}={}}
\totalvariation{
  \channelPMeasure_\channelOut^{(\blocklengthIndex)}
  -
  \channelPMeasure_\channelOut
}
=
\totalvariation{
  \frac{1}{\codebookBlocklength_\blocklengthIndex}
  \sum_{\blockIndex = 1}^{\codebookBlocklength_\blocklengthIndex}
    \codebookPMeasure_{\channelOut_\blockIndex | \codebook_\blocklengthIndex}
  -
  \channelPMeasure_\channelOut
}
\nonumber
\\
&\leq
\frac{1}{\codebookBlocklength_\blocklengthIndex}
\sum_{\blockIndex = 1}^{\codebookBlocklength_\blocklengthIndex}
\totalvariation{
  \codebookPMeasure_{\channelOut_\blockIndex | \codebook_\blocklengthIndex}
  -
  \channelPMeasure_\channelOut
}
\leq
\totalvariation{
  \codebookPMeasure_{\channelOut^{\codebookBlocklength_\blocklengthIndex} | \codebook_{\blocklengthIndex}}
  -
  \channelPMeasure_{\channelOut^{\codebookBlocklength_\blocklengthIndex}}
}.
\label{proof:converse-finite-totvar-bound}
\end{align}
\end{shownto}
So $\channelPMeasure_\channelOut^{(\blocklengthIndex)}$ converges to $\channelPMeasure_\channelOut$ in total variation and thus, in particular, weakly. Moreover, we have $\channelPMeasure_\channelOut = \channelPMeasure_{\hat{\channelOut}}$ because marginalization is a continuous operation under the weak topology, so we can write $\channelPMeasure_{\channelIn,\channelOut}$ instead of $\channelPMeasure_{\hat{\channelIn},\hat{\channelOut}}$. We further observe
\begin{shownto}{arxiv}
\begin{align}
&\hphantom{{}={}}
\codebookBlocklength_\blocklengthIndex \codebookRate
\geq
\entropyWrt{\codebookPMeasure_{\channelIn^{\codebookBlocklength_\blocklengthIndex} | \codebook_{\blocklengthIndex}}}{\channelIn}
\geq
\mutualInformationWrt{\codebookPMeasure_{\channelIn^{\codebookBlocklength_\blocklengthIndex}, \channelOut^{\codebookBlocklength_\blocklengthIndex} | \codebook_{\blocklengthIndex}}}{\channelIn^{\codebookBlocklength_\blocklengthIndex}}{\channelOut^{\codebookBlocklength_\blocklengthIndex}}
\nonumber
\\
&=
\sum_{\substack{\channelInAlphElement^{\codebookBlocklength_\blocklengthIndex} \in \channelInAlph^{\codebookBlocklength_\blocklengthIndex} \\
                \channelOutAlphElement^{\codebookBlocklength_\blocklengthIndex} \in \channelOutAlph^{\codebookBlocklength_\blocklengthIndex}}}
  \codebookPMeasure_{\channelIn^{\codebookBlocklength_\blocklengthIndex}, \channelOut^{\codebookBlocklength_\blocklengthIndex} | \codebook_\blocklengthIndex}
    (\{(\channelInAlphElement^{\codebookBlocklength_\blocklengthIndex}, \channelOutAlphElement^{\codebookBlocklength_\blocklengthIndex})\})
  \log
    \frac{\kernelPower{\channelKernel}{\codebookBlocklength_\blocklengthIndex}(\channelInAlphElement^{\codebookBlocklength_\blocklengthIndex}, \{\channelOutAlphElement^{\codebookBlocklength_\blocklengthIndex}\})}
         {\codebookPMeasure_{\channelOut^{\codebookBlocklength_\blocklengthIndex} | \codebook_\blocklengthIndex}(\{\channelOutAlphElement^{\codebookBlocklength_\blocklengthIndex}\})}
\nonumber
\\
&\begin{aligned}
=\>
&\sum_{\blockIndex=1}^{\codebookBlocklength_\blocklengthIndex}
\sum_{\substack{\channelInAlphElement \in \channelInAlph \\
                \channelOutAlphElement \in \channelOutAlph}}
  \codebookPMeasure_{\channelIn_\blockIndex, \channelOut_\blockIndex | \codebook_\blocklengthIndex}
    (\{(\channelInAlphElement, \channelOutAlphElement)\})
  \log \channelKernel(\channelInAlphElement, \{\channelOutAlphElement\})
\\
&+
\entropyWrt{\codebookPMeasure_{\channelOut^{\codebookBlocklength_\blocklengthIndex} | \codebook_\blocklengthIndex}}{\channelOut^{\codebookBlocklength_\blocklengthIndex}}
\end{aligned}
\nonumber
\\
&\begin{aligned}
=\>
&\codebookBlocklength_\blocklengthIndex
\sum_{\substack{\channelInAlphElement \in \channelInAlph \\
                \channelOutAlphElement \in \channelOutAlph}}
  \channelPMeasure_{\channelIn, \channelOut}^{(\blocklengthIndex)}(\{\channelInAlphElement, \channelOutAlphElement\})
  \log
    \frac{\channelKernel(\channelInAlphElement, \{\channelOutAlphElement\})}
         {\channelPMeasure_{\channelOut}^{(\blocklengthIndex)}(\{\channelOutAlphElement\})}
\\
&+
\codebookBlocklength_\blocklengthIndex
\sum_{\substack{\channelInAlphElement \in \channelInAlph \\
                \channelOutAlphElement \in \channelOutAlph}}
  \channelPMeasure_{\channelIn, \channelOut}^{(\blocklengthIndex)}(\{\channelInAlphElement, \channelOutAlphElement\})
  \log
    \channelPMeasure_{\channelOut}^{(\blocklengthIndex)}(\{\channelOutAlphElement\})
+
\entropyWrt{\codebookPMeasure_{\channelOut^{\codebookBlocklength_\blocklengthIndex} | \codebook_\blocklengthIndex}}{\channelOut^{\codebookBlocklength_\blocklengthIndex}}
\end{aligned}
\nonumber
\\
&=
\codebookBlocklength_\blocklengthIndex
\mutualInformationWrt{\channelPMeasure_{\channelIn, \channelOut}^{(\blocklengthIndex)}}{\channelIn}{\channelOut}
-
\codebookBlocklength_\blocklengthIndex
\entropyWrt{\channelPMeasure_{\channelOut}^{(\blocklengthIndex)}}{\channelOut}
+
\entropyWrt{\codebookPMeasure_{\channelOut^{\codebookBlocklength_\blocklengthIndex} | \codebook_\blocklengthIndex}}{\channelOut^{\codebookBlocklength_\blocklengthIndex}}
\nonumber
\\
\label{proof:converse-finite-totvar-application}
&\geq
\codebookBlocklength_\blocklengthIndex
\mutualInformationWrt{\channelPMeasure_{\channelIn, \channelOut}^{(\blocklengthIndex)}}{\channelIn}{\channelOut}
+
\frac{1}{2}
\lemmaconst_\blocklengthIndex
\log\frac{\lemmaconst_\blocklengthIndex}{2 \cardinality{\channelOutAlph^{\codebookBlocklength_\blocklengthIndex}}}
+
\frac{1}{2}
{\codebookBlocklength_\blocklengthIndex}
\lemmaconst_\blocklengthIndex
\log\frac{\lemmaconst_\blocklengthIndex}{2 \cardinality{\channelOutAlph}}
\\
&\geq
\codebookBlocklength_\blocklengthIndex
\left(
  \mutualInformationWrt{\channelPMeasure_{\channelIn, \channelOut}^{(\blocklengthIndex)}}{\channelIn}{\channelOut}
  +
  \lemmaconst_\blocklengthIndex
  \log\frac{\lemmaconst_\blocklengthIndex}{2 \cardinality{\channelOutAlph}}
\right)
\nonumber,
\end{align}
\end{shownto}
\begin{shownto}{conference}
\begin{align}
&\hphantom{{}={}}
\codebookBlocklength_\blocklengthIndex \codebookRate
\geq
\entropyWrt{\codebookPMeasure_{\channelIn^{\codebookBlocklength_\blocklengthIndex} | \codebook_{\blocklengthIndex}}}{\channelIn}
\geq
\mutualInformationWrt{\codebookPMeasure_{\channelIn^{\codebookBlocklength_\blocklengthIndex}, \channelOut^{\codebookBlocklength_\blocklengthIndex} | \codebook_{\blocklengthIndex}}}{\channelIn^{\codebookBlocklength_\blocklengthIndex}}{\channelOut^{\codebookBlocklength_\blocklengthIndex}}
\nonumber
\\
&\begin{aligned}
=\>
&\sum_{\blockIndex=1}^{\codebookBlocklength_\blocklengthIndex}
\sum_{\substack{\channelInAlphElement \in \channelInAlph \\
                \channelOutAlphElement \in \channelOutAlph}}
  \codebookPMeasure_{\channelIn_\blockIndex, \channelOut_\blockIndex | \codebook_\blocklengthIndex}
    (\{(\channelInAlphElement, \channelOutAlphElement)\})
  \log \channelKernel(\channelInAlphElement, \{\channelOutAlphElement\})
\\
&+
\entropyWrt{\codebookPMeasure_{\channelOut^{\codebookBlocklength_\blocklengthIndex} | \codebook_\blocklengthIndex}}{\channelOut^{\codebookBlocklength_\blocklengthIndex}}
\end{aligned}
\nonumber
\\
&\begin{aligned}
=\>
&\codebookBlocklength_\blocklengthIndex
\sum_{\substack{\channelInAlphElement \in \channelInAlph \\
                \channelOutAlphElement \in \channelOutAlph}}
  \channelPMeasure_{\channelIn, \channelOut}^{(\blocklengthIndex)}(\{\channelInAlphElement, \channelOutAlphElement\})
  \log
    \frac{\channelKernel(\channelInAlphElement, \{\channelOutAlphElement\})}
         {\channelPMeasure_{\channelOut}^{(\blocklengthIndex)}(\{\channelOutAlphElement\})}
\\
&+
\codebookBlocklength_\blocklengthIndex
\sum_{\substack{\channelInAlphElement \in \channelInAlph \\
                \channelOutAlphElement \in \channelOutAlph}}
  \channelPMeasure_{\channelIn, \channelOut}^{(\blocklengthIndex)}(\{\channelInAlphElement, \channelOutAlphElement\})
  \log
    \channelPMeasure_{\channelOut}^{(\blocklengthIndex)}(\{\channelOutAlphElement\})
+
\entropyWrt{\codebookPMeasure_{\channelOut^{\codebookBlocklength_\blocklengthIndex} | \codebook_\blocklengthIndex}}{\channelOut^{\codebookBlocklength_\blocklengthIndex}}
\end{aligned}
\nonumber
\\
&=
\codebookBlocklength_\blocklengthIndex
\mutualInformationWrt{\channelPMeasure_{\channelIn, \channelOut}^{(\blocklengthIndex)}}{\channelIn}{\channelOut}
-
\codebookBlocklength_\blocklengthIndex
\entropyWrt{\channelPMeasure_{\channelOut}^{(\blocklengthIndex)}}{\channelOut}
+
\entropyWrt{\codebookPMeasure_{\channelOut^{\codebookBlocklength_\blocklengthIndex} | \codebook_\blocklengthIndex}}{\channelOut^{\codebookBlocklength_\blocklengthIndex}}
\nonumber
\\
\label{proof:converse-finite-totvar-application}
&\geq
\codebookBlocklength_\blocklengthIndex
\left(
  \mutualInformationWrt{\channelPMeasure_{\channelIn, \channelOut}^{(\blocklengthIndex)}}{\channelIn}{\channelOut}
  +
  \lemmaconst_\blocklengthIndex
  \log\frac{\lemmaconst_\blocklengthIndex}{2 \cardinality{\channelOutAlph}}
\right),
\end{align}
\end{shownto}
where (\ref{proof:converse-finite-totvar-application}) is an application of~\cite[Lemma 2.7]{CsiszarInformation}\footnote{Note that the definition of the variational distance in~\cite{CsiszarInformation} differs from the one used in this paper by a factor of $1/2$.} taking into consideration (\ref{lemma:converse-finite-totvar}) and (\ref{proof:converse-finite-totvar-bound}). Thus, in particular, because the mutual information is lower semicontinuous in the weak topology \cite[Theorem 1]{PosnerRandom}, we can conclude
\[
\mutualInformationWrt{\channelPMeasure_{\channelIn, \channelOut}}{\channelIn}{\channelOut}
\leq
\liminf\limits_{\blocklengthIndex \rightarrow \infty}
  \mutualInformationWrt{\channelPMeasure_{\channelIn, \channelOut}^{(\blocklengthIndex)}}{\channelIn}{\channelOut}
\leq
\codebookRate.
\qedhere
\]
\end{proof}

\begin{proof}[Proof of Theorem~\ref{theorem:converse}]
Pick an increasing sequence $(\sigmaAlgebraOut_\channelDiscretizationIndex)_{\channelDiscretizationIndex \geq 1}$ of finite algebras on $\channelOutAlph$ such that $\bigcup_{\channelDiscretizationIndex \geq 1} \sigmaAlgebraOut_\channelDiscretizationIndex$ generates $\sigmaAlgebraOut$.
Recursively construct sequences $(\blocklengthIndex_\blocklengthSubsequenceIndex^{(\channelDiscretizationIndex)})_{\blocklengthSubsequenceIndex \geq 1}$ for each $\channelDiscretizationIndex \geq 0$. Set $\blocklengthIndex_\blocklengthSubsequenceIndex^{(0)} := \blocklengthSubsequenceIndex$. In order to construct $(\blocklengthIndex_\blocklengthSubsequenceIndex^{(\channelDiscretizationIndex)})_{\blocklengthSubsequenceIndex \geq 1}$ for $\channelDiscretizationIndex > 0$, first define a discrete finite alphabet
$
\channelOutAlph_\channelDiscretizationIndex := \{
  \channelOutAlphElement \subseteq \channelOutAlph
  :
  \channelOutAlphElement
  \text{ is an atom of }
  \sigmaAlgebraOut_\channelDiscretizationIndex
\}
$
and note that any probability measure $\generalPMeasure$ on $\channelOutAlph$ induces a probability measure $\generalPMeasure^{(\channelDiscretizationIndex)}$ on $\channelOutAlph_\channelDiscretizationIndex$ via
$
\generalPMeasure^{(\channelDiscretizationIndex)}(\alphSubset) := \generalPMeasure\left(\bigcup \alphSubset\right)
$
and conversely, $\generalPMeasure^{(\channelDiscretizationIndex)}$ can be seen as a probability measure on $(\channelOutAlph, \sigmaAlgebraOut_\channelDiscretizationIndex)$ by assigning to any set in $\sigmaAlgebraOut_\channelDiscretizationIndex$ the sum of the probabilities of all contained atoms, or, put equivalently, $\generalPMeasure^{(\channelDiscretizationIndex)}$ can be seen as the restriction of $\generalPMeasure$ to $\sigmaAlgebraOut_\channelDiscretizationIndex$.

So for each $\channelInAlphElement \in \channelInAlph$, $\channelKernel(\channelInAlphElement, \cdot)$ induces a probability measure $\channelKernel^{(\channelDiscretizationIndex)}(\channelInAlphElement, \cdot)$ on $\channelOutAlph_\channelDiscretizationIndex$ and thus we get a stochastic kernel $\channelKernel^{(\channelDiscretizationIndex)}$ and thereby a channel $((\channelInAlph, \sigmaAlgebraIn), (\channelOutAlph_\channelDiscretizationIndex, \powerset(\channelOutAlph_\channelDiscretizationIndex))  , \channelKernel^{(\channelDiscretizationIndex)})$. $\channelInAlphElement \mapsto \channelKernel^{(\channelDiscretizationIndex)}(\channelInAlphElement, \cdot)$ is a continuous map to the space of probability vectors on $\channelOutAlph_\channelDiscretizationIndex$, because $\channelInAlphElement \mapsto \channelKernel(\channelInAlphElement, \channelOutAlphElement)$ is continuous for each $\channelOutAlphElement \in \channelOutAlph_\channelDiscretizationIndex$. Furthermore, $\channelPMeasure_\channelOut$ induces $\channelPMeasure_\channelOut^{(\channelDiscretizationIndex)}$ in the same way. A codebook $\codebook_\blocklengthIndex$ is also a codebook for the new channel, and we note that also $\codebookPMeasure_{\channelOut^{\codebookBlocklength_\blocklengthIndex} | \codebook_\blocklengthIndex}^{(\channelDiscretizationIndex)}$ is induced by $\codebookPMeasure_{\channelOut^{\codebookBlocklength_\blocklengthIndex} | \codebook_\blocklengthIndex}$ in the same way. So for any $\alphSubset \subseteq \channelOutAlph_\channelDiscretizationIndex$, we have $\codebookPMeasure_{\channelOut^{\codebookBlocklength_\blocklengthIndex} | \codebook_\blocklengthIndex}^{(\channelDiscretizationIndex)}(\alphSubset) = \codebookPMeasure_{\channelOut^{\codebookBlocklength_\blocklengthIndex} | \codebook_\blocklengthIndex}( \bigcup \alphSubset )$ and $\channelPMeasure_\channelOut^{(\channelDiscretizationIndex)}(\alphSubset) = \channelPMeasure_\channelOut(\bigcup\alphSubset)$. Thus,
\begin{shownto}{arxiv}
\begin{align*}
\totalvariation{
  \codebookPMeasure_{\channelOut^{\codebookBlocklength_\blocklengthIndex} | \codebook_\blocklengthIndex}^{(\channelDiscretizationIndex)}
  -
  \channelPMeasure_\channelOut^{(\channelDiscretizationIndex)}
}
&=
\sup\limits_{\alphSubset \subseteq \channelOutAlph_\channelDiscretizationIndex} \left(
  \codebookPMeasure_{\channelOut^{\codebookBlocklength_\blocklengthIndex} | \codebook_\blocklengthIndex}^{(\channelDiscretizationIndex)}(\alphSubset)
  -
  \channelPMeasure_\channelOut^{(\channelDiscretizationIndex)}(\alphSubset)
\right)
\\
&\leq
\sup\limits_{\alphSubset \in \sigmaAlgebraOut} \left(
  \codebookPMeasure_{\channelOut^{\codebookBlocklength_\blocklengthIndex} | \codebook_\blocklengthIndex}(\alphSubset)
  -
  \channelPMeasure_\channelOut(\alphSubset)
\right)
\\
&=
\totalvariation{
  \codebookPMeasure_{\channelOut^{\codebookBlocklength_\blocklengthIndex} | \codebook_\blocklengthIndex}
  -
  \channelPMeasure_\channelOut
}.
\end{align*}
\end{shownto}
\begin{shownto}{conference}
\begin{align*}
&\hphantom{{}={}}
\totalvariation{
  \codebookPMeasure_{\channelOut^{\codebookBlocklength_\blocklengthIndex} | \codebook_\blocklengthIndex}^{(\channelDiscretizationIndex)}
  -
  \channelPMeasure_\channelOut^{(\channelDiscretizationIndex)}
}
=
\sup\limits_{\alphSubset \subseteq \channelOutAlph_\channelDiscretizationIndex} \left(
  \codebookPMeasure_{\channelOut^{\codebookBlocklength_\blocklengthIndex} | \codebook_\blocklengthIndex}^{(\channelDiscretizationIndex)}(\alphSubset)
  -
  \channelPMeasure_\channelOut^{(\channelDiscretizationIndex)}(\alphSubset)
\right)
\\
&\leq
\sup\limits_{\alphSubset \in \sigmaAlgebraOut} \left(
  \codebookPMeasure_{\channelOut^{\codebookBlocklength_\blocklengthIndex} | \codebook_\blocklengthIndex}(\alphSubset)
  -
  \channelPMeasure_\channelOut(\alphSubset)
\right)
=
\totalvariation{
  \codebookPMeasure_{\channelOut^{\codebookBlocklength_\blocklengthIndex} | \codebook_\blocklengthIndex}
  -
  \channelPMeasure_\channelOut
}.
\end{align*}
\end{shownto}

We can therefore apply Lemma~\ref{lemma:converse-finite} to the codebook sequence $(\codebook_{\blocklengthIndex_\blocklengthSubsequenceIndex^{(\channelDiscretizationIndex - 1)}})_{\blocklengthSubsequenceIndex \geq 1}$ to obtain a subsequence $(\blocklengthIndex_\blocklengthSubsequenceIndex^{(\channelDiscretizationIndex)})_{\blocklengthSubsequenceIndex \geq 1}$ of $(\blocklengthIndex_\blocklengthSubsequenceIndex^{(\channelDiscretizationIndex-1)})_{\blocklengthSubsequenceIndex \geq 1}$ such that $\channelPMeasure_{\channelIn, \channelOut}^{(\blocklengthIndex_\blocklengthSubsequenceIndex^{(\channelDiscretizationIndex)})}$ converges weakly to some $\channelPMeasure_{\channelIn, \channelOut}^{(\channelDiscretizationIndex)}$ such that $\channelPMeasure_{\channelIn}^{(\channelDiscretizationIndex)}$ induces $\channelPMeasure_{\channelOut}^{(\channelDiscretizationIndex)}$ through $\channelKernel^{(\channelDiscretizationIndex)}$. Because of the subsequence construction, all the $\channelPMeasure_{\channelIn, \channelOut}^{(\channelDiscretizationIndex)}$ are compatible with each other and seeing them as measures on $(\channelInAlph \times \channelOutAlph, \sigmaAlgebraIn \sigmaAlgebraProduct \sigmaAlgebraOut_\channelDiscretizationIndex)$, we can define a probability measure $\channelPMeasure_{\channelIn, \channelOut}$ on $(\channelInAlph \times \channelOutAlph, \sigmaAlgebraIn \sigmaAlgebraProduct \sigmaAlgebraOut)$ as the unique extension~\cite[Theorem 3.1]{BillingsleyProbability} of
$
\bigcup_{\channelDiscretizationIndex \geq 1}
  \channelPMeasure_{\channelIn, \channelOut}^{(\channelDiscretizationIndex)}
$
to the $\sigma$-algebra $\sigmaAlgebraIn \sigmaAlgebraProduct \sigmaAlgebraOut$. Since also the marginals are compatible (the marginals for $\channelIn$ are even identical), we can apply \cite[Corollary 5.2.3]{gray2011entropy} and obtain $\mutualInformationWrt{\channelPMeasure_{\channelIn, \channelOut}^{(\channelDiscretizationIndex)}}{\channelIn}{\channelOut} \rightarrow \mutualInformationWrt{\channelPMeasure_{\channelIn, \channelOut}}{\channelIn}{\channelOut}$. Since we know from the statement of Lemma~\ref{lemma:converse-finite} that for all $\channelDiscretizationIndex$, $\mutualInformationWrt{\channelPMeasure_{\channelIn, \channelOut}^{(\channelDiscretizationIndex)}}{\channelIn}{\channelOut} \leq \codebookRate$, it follows that $\mutualInformationWrt{\channelPMeasure_{\channelIn, \channelOut}}{\channelIn}{\channelOut} \leq \codebookRate$, completing the proof of the theorem.
\end{proof}

\bibliographystyle{plain}
\bibliography{references}
\end{document}